\documentclass[10pt, conference]{IEEEtran}
\IEEEoverridecommandlockouts
\usepackage{fancyhdr}
\usepackage{graphicx}
\usepackage[tight,footnotesize]{subfigure}
\usepackage{enumitem} 
\usepackage{float}
\usepackage{flushend}
\usepackage{multicol,multienum}
\usepackage{multirow}
\usepackage{diagbox}
\usepackage{hyperref}
\usepackage{bm}
\usepackage{cite}
\usepackage{graphicx}
\usepackage{amsthm}
\usepackage{tabularx}
\usepackage[ruled,vlined]{algorithm2e} 
\newtheorem{lemma}{Lemma}
\newtheorem{definition}{Definition}
\newtheorem{theorem}{Theorem}
\newtheorem{assumption}{Assumption}
\usepackage[tight,footnotesize]{subfigure}
\usepackage{url}
\usepackage{doi}
\usepackage{xcolor}
\usepackage{amssymb}
\usepackage{amsmath}
\usepackage{booktabs}
\usepackage{lscape}
\usepackage{lipsum}

\let\OLDthebibliography\thebibliography
\renewcommand\thebibliography[1]{
  \OLDthebibliography{#1}
  \setlength{\parskip}{3.4pt}
  \setlength{\itemsep}{4pt}
}

\newenvironment{sloppypar*}
{\sloppy\ignorespaces}
{\par}

\usepackage{tikz}

\usepackage{filecontents}

\begin{document}\sloppy

\date{}

\title{The Right to be Forgotten in Federated Learning: An Efficient Realization with Rapid Retraining}

\author{\IEEEauthorblockN{Yi Liu\textsuperscript{1}, Lei Xu\textsuperscript{2}, Xingliang Yuan\textsuperscript{3}, Cong Wang\textsuperscript{1}, and Bo Li\textsuperscript{4}}
	\IEEEauthorblockA{\textsuperscript{1}City University of Hong Kong; \textsuperscript{2}Nanjing University of Science and Technology; \\\textsuperscript{3}Monash University;
		\textsuperscript{4}The Hong Kong University of Science and Technology. \\ \{yiliu247, congwang\}@cityu.edu.hk, xuleicrypto@gmail.com, xingling.yuan@monash.edu, bli@cse.ust.hk}
\thanks{The authors have provided public access to their code and data at \href{https://github.com/yiliucs/federated-unlearning}{github.com/yiliucs/federated-unlearning}.}}
\maketitle
\thispagestyle{fancy} %
\lhead{} %
\chead{} %
\rhead{} %
\lfoot{} %
\cfoot{} %
\rfoot{\thepage} %
\renewcommand{\headrulewidth}{0pt} %
\renewcommand{\footrulewidth}{0pt} %
\pagestyle{fancy}
\rfoot{\thepage}
\begin{abstract}
In Machine Learning, the emergence of \textit{the right to be forgotten} gave birth to a paradigm named \textit{machine unlearning}, which enables data holders to proactively erase their data from a trained model. Existing machine unlearning techniques focus on centralized training, where access to all holders' training data is a must for the server to conduct the unlearning process. It remains largely underexplored about how to achieve unlearning when full access to all training data becomes unavailable. One noteworthy example is Federated Learning (FL), where each participating data holder trains locally, without sharing their training data to the central server. In this paper, we investigate the problem of machine unlearning in FL systems. We start with a formal definition of the unlearning problem in FL and propose a rapid retraining approach to fully erase data samples from a trained FL model. The resulting design allows data holders to jointly conduct the unlearning process efficiently while keeping their training data locally. Our formal convergence and complexity analysis demonstrate that our design can preserve model utility with high efficiency. Extensive evaluations on four real-world datasets illustrate the effectiveness and performance of our proposed realization.
\end{abstract}

\section{Introduction}\label{sec-1}

The paradigm of ``Machine Unlearning'' \cite{bourtoule2019machine} has attracted much attention recently. 
It has emerged from ``the right to be forgotten" \cite{voigt2017eu}, where individuals should be entitled to the right to have their private data removed from public directories. 
This regulated requirement has also set off a privacy-aware technology reform in the context of Machine Learning. 
From a high-level point of view, the techniques of machine unlearning enable data holders to proactively erase their data from a trained model that memorizes the data. 
Most of the existing machine unlearning techniques focus on the centralized setting~\cite{neel2021descent,koh2017understanding,baumhauer2020machine,ginart2019making,guo2019certified,izzo2020approximate,chen2021graph,wu2020deltagrad}, where all the training data samples are collected to a server, and later specific samples are requested to be erased via unlearning. 
%
%
However, this setting can be somewhat limited in practice because not all data holders are willing to share data for training due to increasing privacy concerns.


In this paper, we investigate the problem of machine unlearning in a more practical scenario, where data holders are collaboratively performing training and unlearning without sharing raw data. 
In particular, we target Federated Learning (FL) \cite{mcmahan2017communication,li2020federated,weng2021fedserving,zhang2020enabling,wang2019beyond,wang2020optimizing,wang2021device}, a widely adopted privacy-aware collaborative learning framework. In FL, data holders train a model from their local data samples, and the server only aggregates data holders' local model updates for data privacy considerations~\cite{chen2019round}. 
Despite the above benefit, not allowing the sharing of raw data also poses unique challenges for unlearning in FL. 


To erase data samples from a trained FL model, a straightforward approach is to simply train a new model from scratch on the remaining dataset (i.e., excluding the samples that need to be erased) following the original training procedure of FL.
However, this approach would be neither practical nor economical for real-world models that are usually trained for weeks from massive training samples.
Existing studies on centralized machine unlearning~
\cite{izzo2020approximate,baumhauer2020machine,ginart2019making,guo2019certified} are not directly applicable because they have to access the raw data during unlearning, which is prohibitive in FL in the first place.
Very recently, Liu \textit{et al.} \cite{liu2020federated} proposed an approach called \textit{FedEraser} that aims to  address this problem in FL. Unfortunately, this approach has two drawbacks. 
First, \textit{FedEraser} only initiates the unlearning algorithm among the clients that request data removal. Namely, other clients' models still hold the contributions of the data to be removed, which will later be aggregated in a global model. 
%
Second, their proposed algorithm utilizes the latest rounds of local model updates to approximate the current unlearned model update. However, the data samples to be removed can be used for training from the very beginning, and the utilized model updates may still consist of the trace of those samples.

Erasing data samples from a trained FL model is non-trivial. To better understand the problem, we first identify two unique challenges in the context of FL as follows. 
		
\noindent \textbf{(C1.)} \textbf{Training is interactive.} In FL, the client holds data locally and iteratively exchanges model updates with the server to learn a shared global model. In this context, it is challenging for the central server to erase the contribution of the client data to the model. The key reason is that the client continuously shares the knowledge learned from the local dataset with other clients via model aggregation. 
%
		
\noindent \textbf{(C2.)} \textbf{Training is incremental.} Training in FL is also an incremental procedure in which any given local update reflects all previous updates that occurred on the client. Suppose the global model is updated based on the local model updates that specific clients generate at a particular training round. In that case, all subsequent model updates will implicitly relate to the model updates of these clients. 

	    

\noindent\textbf{Our Contributions.} 
To address the above challenges, we deem that the best mechanism so far for data erasure in FL is to perform retraining among all the data holders, so as to completely eliminate the contributions of data samples to be removed.  
Then the key problem is how to design a rapid retraining approach in FL while preserving model utility. 
Specifically, we propose a distributed Newton-type model update algorithm, which follows the Quasi-Newton method \cite{xu2020second} and utilizes the first-order Taylor expansion to approximate the loss function trained by the local optimizer on the remaining dataset, i.e., the one excludes the removed data samples.
To further reduce the cost of retraining, we employ diagonal empirical Fisher Information Matrix (FIM) \cite{yao2021adahessian} to efficiently and accurately approximate the inverse Hessian vector to avoid the expensive cost of directly calculating it. 
To preserve model utility, we apply the momentum technique to the diagonal empirical FIM to alleviate the error caused by the approximation techniques and make the model convergence faster and more stable. The contributions of this paper are listed as follow:

\begin{itemize}
    \item We formally define the data erasure problem in federated learning and propose an efficient and effective retraining algorithm.
    Our proposed algorithm is model agnostic and can naturally be integrated to the framework of FL. 
    %
    %
    %
    
    \item We customize a retraining algorithm based on the diagonal empirical FIM for FL, by observing the first-order Taylor expansion of the loss function during the unlearning process. We introduce an adaptive momentum technique to reduce approximation errors in retraining.
    %
    \item We conduct a formal convergence analysis over our proposed algorithm and analyze the time and space complexity. Theoretical analysis results show that our algorithm converges to the true optimization path of the strongly convex objectives. 
    %
    
    \item We implement our design and conduct extensive experiments over four datasets in terms of the ability to achieve unlearning, efficiency, model performance, and parameter sensitivity. Compared to the baseline retraining, our approach can achieve a speedup of $9.1\times$ with an accuracy loss of 2.235 $\times 10^{-3}$ on the large dataset CelebA. 
\end{itemize}

\section{Related Work}


Our work is related to a line of studies on machine unlearning. Based on the methodology and target problems, they can be classified into two categories, exact unlearning methods and approximate unlearning methods.

\noindent \textbf{Exact Unlearning Methods.} These methods are designed to achieve machine unlearning in such a way that the produced models are effectively the same as the ones obtained with retraining. To be specific, they introduce a notion called certified data erasure, which is a theoretical guarantee that a model from which data is removed cannot be distinguished from a model that never observed the data to begin with.
%
Following this objective, some existing work \cite{brophy2021darerf,schelter2021hedgecut,ginart2019making,chen2019novel,baumhauer2020machine} proposes  unlearning methods for simple machine models, such as linear regression models, Random Forests models, and $k$-nearest neighbors models. However, these unlearning methods are not suitable for complex nonlinear models (e.g., deep learning models) \cite{izzo2020approximate}. 
On the other hand, these designs cannot directly be applied in FL, because they cannot erase the unlearned data samples' contributions (i.e., learned knowledge) shared by the clients in FL. 

\noindent  \textbf{Approximate Unlearning Methods.}
Intuitively, these methods aim to achieve higher efficiency of machine unlearning through the relaxation of the effectiveness and certifiability requirements~\cite{baumhauer2020machine,thudi2021necessity}. Most of them~\cite{wu2020deltagrad,neel2021descent,izzo2020approximate} ask the server to utilize historical gradients and model weights to efficiently approximate the gradients in the unlearning process. 
%
%
However, recent work \cite{zhu2020deep,huang2021evaluating} demonstrated that an attacker (e.g., a malicious central server) can exploit the client's local gradients to implement an attack to reconstruct the private training samples of the client. 
Therefore, the above methods may expose the client's private data to the server and are not desired for FL.
%
%
%
A recent design~\cite{liu2020federated} is closely related to our work, which aims to design an efficient retraining algorithm called \textit{FedEraser} to achieve unlearning in FL. 
As mentioned before, this work cannot fully erase the contributions of data samples to be removed from the trained model. 
Besides, it follows the above approximate methods, where the server utilizes cached historical gradients from clients to approximate the current unlearned gradient, which may lead to serious privacy leakage on data holders' data samples, e.g., gradient leakage attacks \cite{zhu2020deep,wang2019beyond}. 

 
\section{Preliminaries}\label{sec-2}
\subsection{Federated Learning Pipeline}\label{sec-2-1}
In the FL setting, we consider a server $\mathcal{S}$ and $K$ clients, participating in training a shared global model $\omega^*$ without sharing their raw private data. In this context, we assume that each client holds a local training dataset ${\mathcal{D}_k} = \{ {x_i},{y_i}\} _{i = 1}^{{n_k}}$, where $n_k$ denotes the number of samples. For the model parameters $\omega  \in {\mathbb{R}^d}$ and a local training dataset $\mathcal{D}_k$, let ${F_k}(\omega ,{x_i})$ be the loss function at the client, and let $f(\omega)$ be the loss function at the server. 
Accordingly, the pipeline of federated learning is defined as follows.

\noindent \textbf{\textit{Phase 1, initialization:}} First, the server selects a certain proportion $q$ of clients from all clients to participate in an FL learning task. Second, the server broadcasts the initialized global model $\omega_0$ to all clients, i.e., $\omega _0^k \leftarrow {\omega _0}$. 
	
\noindent \textbf{\textit{Phase 2, local training:}} For the $t$-th training round, each client trains the received global model $\omega _t^k$ on its own local dataset $\mathcal{D}_k$. On the client side, the goal is to minimize the following objective function: ${F_k}(\omega ) = \frac{1}{{{n_k}}}\sum\limits_{i \in {\mathcal{D}_k}} {{\ell_i}(\omega )},$ where $\mathcal{D}_k$ is the set of indexes of data samples on the client $k$. For a given federated learning task (such as image classification task), we typically take ${\ell_i}(\omega ) = \frac{1}{2}(x_i^T\omega  - {y_i})$ as the loss function of the local client, i.e., the loss of the prediction on training example $({x_i},{y_i})$ made with model parameters $\omega$. 
Then each client uploads its model updates $\Delta {\omega _k} = \eta \nabla {F_k}(\omega )$, where $\eta$ is the learning rate, to the server.


	
\noindent\textbf{\textit{Phase 3, aggregation:}} The server uses a model updates aggregation rule like FedAvg \cite{mcmahan2017communication}  to aggregate all the updates to obtain a new global model $\omega_{t+1}$. Specifically, on the server side, the goal is to minimize the following objective function:
\begin{equation}\label{eq-5}
	f(\omega ) = \sum\limits_{k = 1}^K {\frac{{{n_k}}}{n}{F_k}(\omega ),{F_k}(\omega ) = \frac{1}{{{n_k}}}} \sum\limits_{i \in {\mathcal{D}_k}} {{\ell_i}(\omega )}.
\end{equation}
Note that the above steps will be terminated until the global model reaches convergence.

\subsection{Machine Unlearning Pipeline}\label{sec-2-2}

In the context of ML, the emergence of the right to be forgotten gave birth to a paradigm called \textit{machine unlearning}, which enriches the life cycle of traditional ML pipelines by offering data deletion functions. To be specific, machine unlearning pipelines include how the model is trained, employed for inference, incremental update, and conducted a series of data deletions, while retraining from scratch \cite{mahadevan2021certifiable}.
%

\noindent \textbf{\textit{Training Stage:}} In the machine unlearning setting, we consider a service provider $\mathcal{S}$ and $K$ data owners, where $K$ data owners share the data with the service provider to form a initial training dataset $\mathcal{D}=\mathcal{D}_{init}$. 
Notably, the data owner can initiate a data deletion request to the service provider at any time during the life cycle of the pipeline to perform data deletion. Therefore, the initial training dataset $\mathcal{D}$ is not always constant, which is a subset of the initial training dataset, i.e., ${\mathcal{D}} \subseteq \mathcal{D}_{init}$. Specifically, if service provider deletes some data samples from $\mathcal{D}_{init}$, thus, we have: ${\mathcal{D}} \leftarrow \mathcal{D}\backslash {\mathcal{D}_m}$, where $\mathcal{D}_m$ is the deleted data subset. So we use $\mathcal{D}$ represents the currently available training dataset. The loss or objective function for a general machine learning model is defined as:
    \begin{equation}\label{eq-1}
        \mathop {\min }\limits_{\omega  \in {\mathbb{R}^d}} \mathcal{L}(\omega ), \text{where\quad}  \mathcal{L}(\omega ) = \frac{1}{n}\sum\limits_{i = 1}^n {{\ell _i}(\omega )}  + \frac{\lambda }{2}||\omega |{|_2},
    \end{equation}
where the above equation captures the average loss of the model, $\omega$ represents the model parameter, $\ell _i(\omega)$ represents the loss of the $i$-th data samples (e.g., cross-entropy loss), and $\lambda$ is the regularization constant parameter to prevent over-fitting. 
If we use mini-batch stochastic gradient descent (SGD) to update model parameter, then for $t = 1, \ldots ,T$, we have: ${\omega _{t + 1}} \leftarrow \omega  - \frac{\eta }{B}\sum\limits_{i = 1}^n \nabla  {\ell _i}({\omega _t}),$
where $\eta$ is the learning rate and $B$ is the mini-batch size, and model $\omega^*$ can be obtained by incremental update using the above equations. Service providers can deploy this model $\omega^*$ to provide inference services.
    
\noindent \textbf{\textit{Inference Stage:}} At this stage, the data holder can submit a service request to the service provider at any time, that is, upload the data sample $x_i$ to the server and expect to obtain the corresponding prediction result $y_i$. The service provider utilizes the currently available model $\omega$ to output the inference results to the data holders through the pipeline API. Furthermore, the data holder can also initiate data deletion requests to the service provider at any time, which will prompt the pipeline to proceed to the unlearning stage.
    
\noindent \textbf{\textit{Unlearning Stage:}} When the data holders finish requesting data deletion, the server performs the data deletion operation to obtain the remaining training dataset ${\mathcal{D}_u} \leftarrow \mathcal{D}\backslash {\mathcal{D}_m}$. The goal is to erase the contribution of the deleted data samples $\mathcal{D}_m$ from the current model $\omega$, that is, execute the \textit{unlearning algorithm} to ``unlearn'' the deleted dataset $\mathcal{D}_m$. For the \textit{unlearning algorithm}, a  naive method is \textit{retraining from scratch}, i.e., applying mini-batch SGD directly over the dataset $\mathcal{D}_u$ (we use $\omega^u$ to denote the corresponding model parameter). Then, run: $\omega _{t + 1}^u \leftarrow \omega _t^u - \frac{\eta }{{B - \Delta B_t}}\sum\limits_{i \in {\mathcal{D}_u}} {\nabla {\ell _i}(} \omega _t^u),$ where $\Delta B$ is the size of the subset removed from the $t$-th mini-batch. In this context, the goal is to minimize the following objective function: ${\mathcal{L}^u}(\omega ) = \frac{1}{{|{\mathcal{D}_u}|}}\sum\nolimits_{i \in {\mathcal{D}_u}} {{\ell _i}(\omega ).} $ 
    

\section{Problem definition}\label{sec-3}
\subsection{Formalizing the Problem of Federated Unlearning}\label{sec-3-2}
We now formalize the data erasure problem in FL. In our setting, the server $\mathcal{S}$ controls the whole training process by collecting and aggregating model updates uploaded by the clients and these clients ${\mathcal{C}} = \{ {C_1},{C_2}, \ldots ,{C_k}\} ,k \in \{ 1, \ldots ,K\}$ hold different local datasets ${\mathcal{D}_k}$. In particular, we enable the client to have the control to erase data, i.e., the client can request the server to delete some specific data samples. First, we introduce the concepts of federated data deletion operation. Here, we use the term ``learned client'' to refer to those clients that have not performed data erasure, and the term ``unlearned client'' to refer to those clients that perform data erasure. We assume that there are unlearned clients ${C_{{k_u}}} = \{ {C_1'},{C_2'}, \ldots ,{C_{{k_u}}'}\} ,{k_u} \in \{ 1, \ldots ,K\} \backslash {k_c}$ that initiate \textit{deletion} requests to the server, where $k_c$ is the index of learned client set $C_{k_c}$. Note that when $k_c=0$, it is a special case such that all clients initiate \textit{deletion} requests to the server.
\begin{definition}\label{defi-1}
\textit{(Federated Data Deletion Operation). A deletion $u$ is a pair $(\Omega, o )$ where $\Omega  \in \mathcal{D} $ is a data sample and $o$ is a deletion request. A deleted sequence $\mathcal{U}$ is a sequence $({u_1},{u_2},\ldots, u_i)$ where ${u_i} \in \mathcal{D} \times o$ for all $i$. For the $k$-th client, given a local dataset $\mathcal{D}_k$ and a deleted sequence $\mathcal{U}_k$, the deletion operation is defined as follows:}
\begin{equation}\label{eq-6}
\mathcal{D}_k^u = {\mathcal{D}_k} \circ {\mathcal{U}_k} \triangleq {\mathcal{D}_k}\backslash \{ \Omega \} _{i = 1}^{{R_k}},
\end{equation}
where $R_k$ is the number of deleted data samples. For example, an unlearned client deletes $R_k$ $({R_k} \ll {n_k})$ (i.e., $\mathcal{U}_k=\{u_1,u_2,\ldots,u_{R_k}\}$) data samples from the local dataset $\mathcal{D}_k$ according to the designed data delete operation to obtain the deleted dataset $\mathcal{D}_k^u$.
\end{definition}

By Definition \ref{defi-1}, we formalize the problem of federated unlearning as a game between two entities: the service provider $\mathcal{S}$ and the client set $\mathcal{C}$. The service provider can be a central aggregator that  collects model updates from various clients, where each client $C_k$ holds different local dataset ${\mathcal{D}_k}$. Service providers use the collected model updates to update and obtain the FL model $\omega^*$. Any unlearned client $C_{k_u}  \subseteq  \mathcal{C}$ can revoke the contribution of his data $\mathcal{D}_k^u \in \{\Omega\}_{i = 1}^{{R_k}}$ to $\omega^*$ by using the specified \textit{unlearning} algorithm $\mathcal{A}$.

To this end, the service provider has to erase the client's data contribution and \textit{retrain} any trained global models $\omega^*$ to produce $\omega^u$, that is, the contribution of the data $\mathcal{D}_k^u$ is not in the $\omega^u$ model. It is conceivable that the parameters of $\omega^*$ and $\omega^u$ are similar (despite stochasticity in learning), and it is desired for their performance (in terms of test accuracy) to be comparable. However, the fact that model $\omega^u$ was obtained by training on the remaining dataset from scratch provides a certificate to the data owner that their data share was indeed removed. Certified data erasure is defined as a theoretical guarantee that a model from which data is removed cannot be distinguished from a model that never observed the data to begin with. This conveys a very strong notion of privacy. Therefore, our goal is to make service providers revoke the contribution of specified data samples to the global model $\omega^*$ while maintaining this model's performance. The federated unlearning problem is defined as follows:

\begin{definition}\label{defi-5}
\textit{(Federated Unlearning). Let $\omega^*$ denote the global model produced by the federated learning pipeline, and $\omega^u$ denote the global model produced by the federated unlearning pipeline. For a given threshold $\varepsilon  \geqslant 0$, the federated unlearning problem is defined as: there exists an unlearning algorithm $\mathcal{A}$ that makes the following equation hold:}
 \begin{equation}\label{eq-7}
||T({\omega ^*}) - {T_\mathcal{A}}({\omega ^u})|| \leqslant \varepsilon, 
\end{equation}   
where $T( \cdot )$ denotes the distribution of models learned using unlearning algorithm $\mathcal{A}$ and $\varepsilon$ is a small positive number. $\mathcal{A}$ achieves unlearning when these two distributions are $\varepsilon$--identical. In other words, in our work, this evidence (i.e., the output of unlearning algorithm $\mathcal{A}$) takes the form of a training algorithm, which if implemented correctly, guarantees that the parameter distributions of $\omega^*$ and $\omega^u$ are $\varepsilon$--identical.
\end{definition}  

\subsection{Goals of Federated Unlearning}\label{sec-3-3}
As discussed in Section \ref{sec-2-2}, the simple and effective way to achieve federated unlearning is to retrain the global model from scratch on the remaining training dataset. However, such a method is unrealistic for an FL system due to large computing resources and communication overhead. Moreover, to comply with data regulations such as GDPR \cite{voigt2017eu} for continuous data deletion requests, service providers need to retrain the global model frequently. 
In this paper, we consider this method as a baseline method to compare with our solution. 
To this end, our proposed solution needs to meet the following goals:

\noindent \textbf{(G1.)} \textbf{Zero Contribution Guarantees:} Similar to the baseline, any new unlearning solution should ensure that the erased data has zero contribution to the unlearned model, i.e., the erased data does not influence the model parameters. 
		
\noindent \textbf{(G2.)} \textbf{Privacy Guarantee:} Any new unlearning solution should maintain the privacy of the client in the FL settings. Any attacker, including the server, cannot easily utilize gradient information to implement gradient leakage attacks \cite{zhu2020deep}.
		
\noindent \textbf{(G3.)} \textbf{Comparable Accuracy:} It is conceivable that there is a trade-off between the performance of the unlearning solution and the accuracy of the model, which indicates that the accuracy of the model would decrease. Even if there is no component of the approach that explicitly promotes high accuracy, any unlearning solution should strive to introduce a small accuracy gap in comparison to the baseline for any number of data samples unlearned.
		
\noindent \textbf{(G4.)} \textbf{Reduced Retraining Time:} Additionally, no matter how many data samples need to be unlearned, any new unlearning solution should be faster than the baseline (or other state-of-the-art unlearning approaches). Namely, any new unlearning solution should not introduce additional computation and communication overhead to the original training procedure, which is already expensive  \cite{bourtoule2019machine}. 

		
		
\noindent \textbf{(G5.)} \textbf{Model Agnostic:} The  unlearning solution should be general, which means that it can be applied to any model. It should also provide the aforementioned guarantees for the system settings of varying nature and complexity \cite{bourtoule2019machine,ginart2019making}.

\begin{figure}[!t]
 \centering
 \includegraphics[width=1\linewidth]{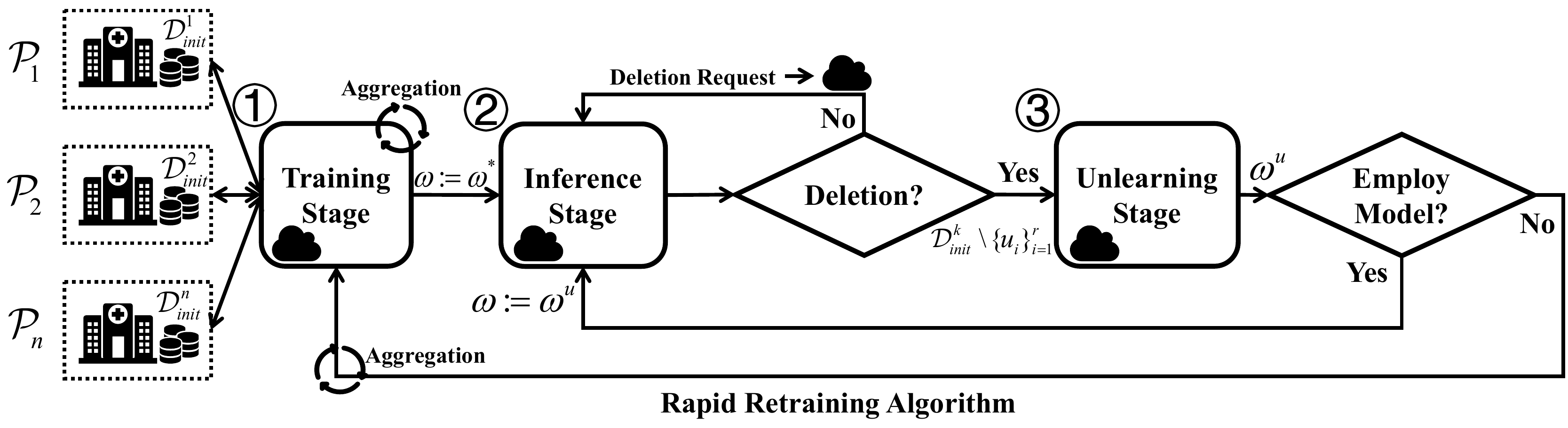}
 \caption{Federated unlearning pipeline with the three stages of training, inference, and unlearning.}
  \label{fig:overview}
  \vspace{-0.5cm}
\end{figure}
\section{Our Design}\label{sec-4}
\subsection{Federated Unlearning Pipeline}\label{sec-4-1}
In this section, we present a federated unlearning pipeline, as illustrated in Fig. \ref{fig:overview}. Specifically, the pipeline includes three stages: \textit{training stage, inference stage, and unlearning stage.} The training stage and the inference stage is the same as those in FL systems (refer to Section \ref{sec-2-1} for details). 
The main difference from FL is that this pipeline adds the two functions, i.e., data deletion and rapid retraining, both of which are included in the unlearning stage. Thus, the details of the unlearning stage are as follows: 

\noindent \textbf{\textit{Unlearning Stage:}} \textbf{\textit{(a) Data Deletion}}: Assuming that the $k_u$-th unlearned client initiates several data deletion requests $\mathcal{U}_k=\{u_1,u_2,\ldots,u_{R_k}\}$ to the server before the start of the $t$-th round of training, these clients need to perform data deletion operations, where the remaining local dataset is denoted as $\mathcal{D}_k^u$. \textbf{\textit{(b) Rapid Retraining}}: Then all clients execute the proposed unlearning algorithm (see below for more details) on the remaining local dataset $\mathcal{D}_k^u$ to achieve unlearning. Specifically, to eliminate the influence of unlearned data samples on the FL model, we perform a \textit{retrain} operation to make the model forget the knowledge represented by these unlearned data samples. In this context, the naive solution (i.e., retraining from scratch) is to apply mini-batch stochastic gradient descent (SGD) directly over the remaining local training dataset $\mathcal{D}_k^u$ (we use $\omega_t^{k_u}$ to denote the corresponding model parameter), i.e., for $k$-th client, the loss function can be defined as follows:
\begin{equation}\label{eq-8}
{F_{k_u}}(\omega ) = \frac{1}{{B - \Delta B_t}}\sum\limits_{({x_i},{y_i}) \in \mathcal{D}_k^u} {{\ell _i}(\omega )},
\end{equation}
where $B$ is the mini-batch size and ${\Delta {B_t}}$ is the size of the subset removed from the $t$-th mini-batch. Thus, each client updates its local model by using: $\omega _{t + 1}^{{k_u}} = \omega _t^{{k_u}} - \frac{\eta }{{B - \Delta {B_t}}}\sum\limits_{({x_i},{y_i}) \in \mathcal{D}_k^u} {\nabla {F_k}(\omega _t^u)} ,$ where $\eta $ is the learning rate. After executing the local unlearning step, all clients upload model updates to the server so that the server can aggregate these updates to obtain a new unlearned global model. Thus, we have the following execution step. \textbf{\textit{(c) Aggregation:}} Recall that, to achieve unlearning, all unlearned clients and all learned clients perform local retraining and upload the updates of the unlearned model to the server. Then the server uses the aggregation rules to aggregate the model updates of the unlearned client and other clients. Such an operation can meet the goals \textbf{G1} and \textbf{G2}. Suppose we use the classic aggregation rule (i.e., FedAvg) to aggregate these updates, the formal definition of this aggregation rule is as follows:
\begin{equation}\label{eq-9}
    \omega _{t+1}^u = \frac{1}{{{K_c}\sum {{p_{{k_c}}}} }}\sum\limits_{{k_c}} {{p_{{k_c}}}\omega _t^{{k_c}}}  + \frac{1}{{{K_u}\sum {{p_{{k_u}}}} }}\sum\limits_{{k_u}} {{p_{{k_u}}}\omega _t^{{k_u}}}.
\end{equation}
Note that user-defined aggregation rules can also be used in our pipeline, which do not affect the operations of it.

\subsection{Federated Rapid Retraining}\label{sec-4-2}
Obviously, the baseline is a time-consuming and resource-consuming unlearning solution, which is undesirable for real-world large FL systems. This motivates us to seek a time-saving and energy-efficient rapid retraining solution. To this end, we follow the Quasi-Newton methods \cite{xu2020second} and use the first-order Taylor approximation technique to propose an efficient method. Specifically, for the $k$-th client, let $\nabla {F_{{k_u}}}({\omega}) = 0$ around $\omega^*$, we obtain:
\begin{equation}\label{eq-10}
\nabla {F_{{k_u}}}({\omega ^*}) + {H_{{k_u}}}({\omega ^*})({\omega ^{u}} - {\omega ^*}) \approx 0,
\end{equation}
where ${\omega ^{u}} = \arg \min \nabla {F_{{k_u}}}(\omega )$ is a unique global minimum (i.e., global optimal unlearned model), ${H_{{k_u}}}({\omega ^*}) = {\nabla ^2}{F_{{k_u}}}({\omega ^*})$ is the Hessian matrix, and $\omega^*$ is a minimizer for $F_k(\omega)$. According to the first-order optimality condition, we have: $\omega _{t + 1}^{{k_u}} = \omega _t^{{k_u}} + \frac{1 }{{B - \Delta {B_t}}}H_{{k_u}}^{ - 1}{\Delta _k},$ where $\Delta$ is the local gradient, i.e., $\Delta_{k_u}  = \nabla {F_k}(\omega _t^{{k_u}},\mathcal{D}_k^u)$, and $H_{k_u} = {\nabla ^2}{F_k}(\omega _t^{{k_u}},\mathcal{D}_k^u)$. In this way, we can utilize \textit{Newton-type} update strategy to efficiently achieve unlearning goals. However, the calculation of the inverse Hessian-vector is still computationally expensive. Therefore, we further explore how to efficiently calculate the inverse Hessian matrix $H_{k_u}^{-1}$.

\begin{algorithm}[!t]
\footnotesize
	\caption{Federated Rapid Retraining Algorithm}\label{al-1}
	\LinesNumbered 
	\KwIn{Local training dataset $\mathcal{D}_k=\{ {x_i},{y_i}\} _{i = 1}^{n_k}$, model $\omega$, mini-batch size $B$,  unlearned dataset ${\cal U}_k$, learning rate ${\eta}$, and local epoch $E_{local}$.}
	\KwOut{Unlearned global model $\omega^u$.}
	\textbf{Deletion Operation:}\hfill $\rhd$Run on client $k$\\
	\ForEach{unlearned client \textbf{in parallel}}{
	Perform batch data deletion operations, i.e., ${\cal D}_k^u \leftarrow {{\cal D}_k}\backslash {{\cal U}_k}$\;
	}
	\textbf{Rapid Retraining Stage:}\hfill $\rhd$Start unlearning stage\\
	\textbf{Server Executes:}\\
	Reinitialize the global model and send it to all clients, i.e., $\omega _0^{k_u} \leftarrow \omega$\;
	\textbf{Client Executes:}\\
	\For{all clients \textbf{in parallel}}{
	\ForEach{unlearned client \textbf{in parallel}}{
		\For{$t = 0,1, \ldots ,T$}{
       LocalTraining $({{k_u}},\omega _t^{{k_u}},{E_{local}},{\cal D}_k^u)$\;
	}
}
	\ForEach{normal client \textbf{in parallel}}{
	\For{$t = 0,1, \ldots ,T$}{
       LocalTraining $({{k_c}},\omega _t^{{k_u}},{E_{local}},{\cal D}_k)$\;}
	}
	}
\textbf{Server Executes:}\\
Update unlearned global model parameter ${\omega _{t + 1}^u}$\; 

\textbf{LocalTraining} $({{k}},\omega ,{E_{local}},{\cal D})$ $\rhd$Run on client $k$\\
\ForEach{local epoch $i$ from 1 to $E_{local}$}{
\For{batch $b \in {B}$}{
${\cal D}_k^u \leftarrow {{\cal D}_k}\backslash {\cal U}_k$ \hfill\% Current unlearned dataset\;
${\Delta _k} \leftarrow \nabla {\ell _k}(\omega _t^{{k_u}},{\cal D}_k^u)$ \hfill \% Current step gradient\;
${\Gamma_k} \leftarrow diag(\Gamma_k)$ \hfill \% Current step estimated diagonal FIM\;
Update ${{\bar G}_t}$ based on Eq. \eqref{eq-15}\;
Update $m_t$, $v_t$ based on Eq. \eqref{eq-15}\;
$\omega _{t + 1}^{{k_u}} = \omega _t^{{k_u}} - \frac{\eta }{{B - \Delta {B_t}}}{m_t}/{v_t}$\;
}
}
\Return $\omega^u$.  
\end{algorithm}	
To address this problem, recent work \cite{wu2020deltagrad,liu2020federated} uses the limited-memory Broyden Fletcher Goldfarb Shanno (L-BFGS) algorithm \cite{matthies1979solution,berahas2016multi,bollapragada2018progressive} by leveraging the historical parameter-gradient pair $\{ (\omega _t^k,\nabla F_k(\omega _t^k))\} _{t = 1}^T$ stored in the unlearned client to approximate $H_{k_u}^{-1}$ for each of $T$ iterations. In particular, L-BFGS fits the Hessian matrix through the first $m$ historical parameter-gradient pairs without explicitly constructing and storing the approximation of the Hessian matrix or its inverse matrix, and its time and space complexity is $\mathcal{O}(mn)$. Nevertheless, L-BFGS algorithms can just efficiently solve the Hessian approximation problem only when the model is small (i.e., generally the model parameter is less than $10^4$, but they cannot be directly applied to the setting of large models (e.g., ResNet \cite{he2016deep}). Furthermore, if the server stores historical gradients and parameters, it will incur privacy disclosure risks \cite{zhu2020deep} to the clients.

Motivated by the limitations mentioned above, we aim to answer the following question: \textit{how to efficiently approximate the inverse Hessian matrix without utilizing the historical parameter-gradient pair $\{ (\omega _t^k,\nabla F_k(\omega _t^k))\} _{t = 1}^T$ in FL?} In this paper, we propose a low-cost Hessian approximation method, i.e., diagonal empirical Fisher Information Matrix (FIM)-based approximation method to efficiently approximate Hessian matrix. So the unlearning update rule can be rewritten as follows:
\vspace{-0.4cm}
\begin{equation}\label{eq-11}
\omega _{t + 1}^{{k_u}} = \omega _t^{{k_u}} - \frac{1 }{{B - \Delta {B_t}}}\Gamma _{{k_u}}^{ - 1}{\Delta _{{k_u}}},
\end{equation}
where $\Gamma _k$ is the FIM. Then, we show that the outer product of the gradient is an asymptotically unbiased estimate of the Hessian matrix and we define the outer product matrix of the gradient at $\omega_t^{k_u}$ as follows:
\begin{equation}\label{eq-12}
H_t^{{k_u}} \approx \Gamma _t^{{k_u}} = \frac{1}{{B - \Delta B}}\sum\limits_{({x_i},{y_i}) \in \mathcal{D}_k^u} {\nabla {F_k}({x_i},\omega _t^{{k_u}})} \nabla {F_k}{({x_i},\omega _t^{{k_u}})^ \top },
\end{equation}
where $H_{t}^{k_u}$ is the empirical expectation of the outer product as an approximation to the Hessian at $\omega_t^{k_u}$. Recall that, since the cross-entropy loss we use is negative log-likelihood, it is not difficult to obtain ${\mathbb{E}_{{x_i} \in \mathcal{D}_k^u}}[\nabla {F_{{k_u}}}({x_i},{\omega ^*})\nabla {F_{{k_u}}}{({x_i},{\omega ^*})^{\top}}]$, where $\omega ^*$ is the true parameter (i.e., the Softmax distribution of the local model) obtained in the form of FIM. According to the definition of two equivalent methods for calculating the FIM \cite{friedman2001elements,ly2017tutorial}, the above mentioned equation can also be written as ${\mathbb{E}_{{x_i} \in \mathcal{D}_k^u}}[{\nabla ^2}{F_{{k_u}}}(x,{\omega ^*})]$. Thus, the lemma about the approximation error $\epsilon_t$ is as follows:
\begin{lemma}\label{le-1}
(Upper bound on $\epsilon_t$). Let $\epsilon_t$ be the approximate error of the FIM approximation of the Hessian matrix, then when $t \to \infty $, the following equation holds:
\begin{equation}\label{eq-13}
\epsilon_t = {\mathbb{E}_{(y|x,{\omega ^*})}}[H_t + \Gamma _t] \to 0.
\end{equation} 
\end{lemma}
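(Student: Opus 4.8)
The plan is to recognize Eq.~\eqref{eq-13} as the classical \emph{information matrix equality}---the identity between the two equivalent forms of the Fisher information---made to hold in the limit by the consistency of the iterates. First I would make the likelihood structure explicit. Since the per-sample loss is the cross-entropy (negative log-likelihood), I write $\ell_i(\omega) = -\log p(y_i \mid x_i,\omega)$, so that $\nabla F_{k_u}(x_i,\omega)$ is, up to sign, the score $\nabla_\omega \log p$, the outer-product matrix $\Gamma_t$ of Eq.~\eqref{eq-12} is the empirical average of $\nabla\log p\,\nabla\log p^\top$, and $H_t$ is the Hessian of the log-likelihood. I would fix the sign convention at the outset: reading $H_t$ as the Hessian of the log-likelihood (negative semidefinite near the optimum) is exactly what makes the ``$+$'' in Eq.~\eqref{eq-13} a cancellation rather than a doubling, since $\Gamma_t$ is positive semidefinite.

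Next I would derive the population identity at the true parameter $\omega^*$. Starting from the normalization $\int p(y\mid x,\omega)\,dy = 1$ and differentiating once under the integral sign gives the zero-mean score property $\mathbb{E}_{(y\mid x,\omega^*)}[\nabla\log p] = 0$. Differentiating a second time and substituting $\nabla^2 p = p\,(\nabla^2\log p + \nabla\log p\,\nabla\log p^\top)$ yields
\begin{equation}\label{eq-ime}
\mathbb{E}_{(y\mid x,\omega^*)}\!\left[\nabla^2\log p + \nabla\log p\,\nabla\log p^\top\right] = 0,
\end{equation}
which is precisely the exact-at-$\omega^*$ version of Eq.~\eqref{eq-13}. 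This is the algebraic heart of the lemma and is what justifies calling $\Gamma_t$ an asymptotically unbiased estimate of $H_t$ in Eq.~\eqref{eq-12}.

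Then I would close the gap between this population identity and the finite-iteration quantities, which requires controlling two limits simultaneously as $t\to\infty$. First, the optimizer iterate must converge to the minimizer, $\omega_t^{k_u}\to\omega^*$; this is where the strong-convexity convergence analysis promised later in the paper is invoked. Second, the mini-batch average in Eq.~\eqref{eq-12} must converge to its expectation under $(y\mid x,\omega^*)$ by the law of large numbers. Under the standard regularity assumptions that license differentiation under the integral sign---twice continuous differentiability of $\log p$ and domination of the derivatives---both $\omega\mapsto\mathbb{E}[\nabla^2\log p]$ and $\omega\mapsto\mathbb{E}[\nabla\log p\,\nabla\log p^\top]$ are continuous, so evaluating \eqref{eq-ime} along $\omega_t^{k_u}\to\omega^*$ and adding the vanishing sampling error gives $\epsilon_t\to 0$.

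The hard part will be precisely this coupling of the two convergences together with the well-specification implicit in $\mathbb{E}_{(y\mid x,\omega^*)}[\cdot]$: the information matrix equality is exact only at the true parameter and only when the expectation is taken under the model's own distribution at $\omega^*$. For finite $t$ the iterate sits away from $\omega^*$ and the empirical data law need not equal $p(\cdot\mid x,\omega^*)$, so the cancellation is only approximate; the real work is to bound the residual $\epsilon_t$ by the optimizer error $\|\omega_t^{k_u}-\omega^*\|$ and the sampling error and to show that each vanishes in the limit. A secondary subtlety worth flagging is that strict strong convexity of $F_{k_u}$ and negative-semidefiniteness of the log-likelihood Hessian cannot both hold globally, so the cancellation in \eqref{eq-ime} should be understood locally around $\omega^*$, consistent with the asymptotic ($t\to\infty$) framing of the statement.
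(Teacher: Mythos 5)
Your proposal is correct and rests on exactly the same core argument as the paper's proof: the classical information matrix equality, obtained by expanding $\nabla^2 \log p$ into $\nabla^2 p / p$ minus the score outer product and killing the leading term via $\int \nabla^2 p = \nabla^2 \int p = 0$, so that the outer-product term cancels against $\Gamma_t$. Your additional care about coupling the two limits (iterate convergence $\omega_t^{k_u}\to\omega^*$ and the law of large numbers for the mini-batch average) and about well-specification goes beyond the paper, whose proof establishes only the pointwise population identity at $\omega^*$ and silently treats that as the full asymptotic claim.
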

\begin{proof}\label{proof-1}
The proof of Eq. \eqref{eq-13} can be equivalent to proving that the following equation holds: ${\mathbb{E}_{(y|x,{\omega ^*})}}{H_t} =  - {\Gamma _t}$. First, suppose we have a model parameterized by a parameter vector $\omega$, which models the distribution $p (x|\omega)$. To learn model $\omega$, we need to maximize the likelihood $p (x|\omega)$ wrt. parameter $\omega$. 
Now, 
since the Hessian matrix of the log-likelihood is given by the Jacobian matrix of its gradient, then we have ${H_{{{\log }_{p(x|\omega )}}}} =  \frac{{{H_{p(x|\omega )}}}}{{p(x|\omega )}} - (\frac{{\nabla p(x|\omega )}}{{p(x|\omega )}}){(\frac{{\nabla p(x|\omega )}}{{p(x|\omega )}})^{\top}}.$

By the above results, it is easily to get
\begin{align*}
\epsilon_t &\textbf{}= {\mathbb{E}_{(y|x,{\omega ^*})}}[H_t + \Gamma _t]\\
&= \mathop \mathbb{E}\limits_{p(x|\omega )} [\frac{{{H_{(x|\omega )}}}}{{p(x|\omega )}} - (\frac{{\nabla p(x|\omega )}}{{p(x|\omega )}}){(\frac{{\nabla p(x|\omega )}}{{p(x|\omega )}})^{\top}}]+{\mathbb{E}_{(y|x,{\omega ^*})}}[\Gamma _t]\\
&=\int {{{{H_{p(x|\omega )}}}}dx - \mathop \mathbb{E}\limits_{p(x|\omega )} [\nabla \log p(x|\omega )\nabla \log p{{(x|\omega )}^{\top}}-\Gamma _t]}\\
&=H \int p(x|\omega)dx=H_1=0.
\end{align*}  
\vspace{-0.4cm}
\end{proof}
Above result implies that we can use ${\mathbb{E}_{{x_i} \in \mathcal{D}_k^u}}[\nabla {F_{{k_u}}}({x_i},{\omega ^*})\nabla {F_{{k_u}}}{({x_i},{\omega ^*})^{\top}}]$, where $\omega ^*$ as an asymptotically unbiased estimate of ${\mathbb{E}_{{x_i} \in \mathcal{D}_k^u}}[{\nabla ^2}{F_{{k_u}}}(x,{\omega ^*})]$ when $\omega$ gradually converges to $\omega^{u}$ during training.

Additionly, to significantly reduce the computational complexity and storage while maintaining the accuracy of the approximator $\Gamma $, in DNN we leverage the diagonalization technique to approximate FIM \cite{becker1988improving,zheng2017asynchronous}, i.e., $diag(\Gamma) \approx {\Gamma}$, where $diag(\Gamma)$ is the diagonal matrix of $\Gamma$. Specifically, we only need to store the diagonal elements of $\Gamma$ and make all other elements zero. 
Although this method can approximate the Hessian matrix efficiently, it may be affected by the variance generated by the iterative update of FL, which may lead to higher approximation errors and unstable convergence. To address this issue, this paper follows the momentum techniques used in Adam \cite{kingma2014adam} and RMSProp \cite{tieleman2012lecture} to easily apply momentum to the Hessian diagonal since it can achieve more stable and faster convergence. More specifically, let ${{\bar G}_t}$ denote the Hessian diagonal with momentum, and the first and second-order moments ($m_t$ and $v_t$) for Hessian momentum are computed as follows:
\begin{equation}\label{eq-15}
{m_t} = \frac{{(1 - {\beta _1})\sum\nolimits_{i = 1}^t {\beta _1^{t - i}{\Delta _i}} }}{{1 - \beta _1^t}},{v_t} = \sqrt {\frac{{(1 - {\beta _2})\sum\nolimits_{i = 1}^t {\beta _2^{t - i}{\Gamma _i}{\Gamma _i}} }}{{1 - \beta _2^t}}} ,
\end{equation}
where ${v_t} = {{\bar G}_t}$, $0 < \beta _1^t < \beta _2^t < 1$ are the first and second moment hyperparameters that are also used in Adam. Therefore, refer to Eq. \eqref{eq-10}--\eqref{eq-15}, Eq. \eqref{eq-11} can be rewritten as follows: $\omega _{t + 1}^{{k_u}} = \omega _t^{{k_u}} - \frac{1 }{{B - \Delta {B_t}}}{m_t}/{v_t}.$
Furthermore, other normal clients perform local training, i.e., run the following update rule: $\omega _{t + 1}^{{k_u}} = \omega _t^{{k_u}} - \frac{1 }{B}{m_t}/{v_t}.$

We present our federated rapid retraining algorithm, as shown in Algorithm \ref{al-1}. As seen in lines 1 to 3, all unlearned clients perform a mini-batch data deletion operation, i.e., split the deleted data into small batches of size $B <= R_k$ and sequentially for each of them. As a result, each unlearned client obtains a locally deleted dataset $\mathcal{D}_k^u$. Such a method will result in multiple smaller second-order Fisher information correction steps, thereby obtaining a more effective FL model at the cost of efficiency. Then all clients start the unlearning stage and perform local training on the local dataset. In particular, the unlearned client performs local training on the locally deleted dataset $\mathcal{D}_k^u$, while the normal clients perform local training on the original dataset $\mathcal{D}_k$. This is described in lines 9 to 14 and lines 18 and 25. As seen in line 16, the server uses aggregation rules such as FedAvg to iteratively aggregate the model updates uploaded by the client.
  
\section{Theoretical Analysis}\label{sec-5}
\subsection{Convergence Analysis} \label{sec-5-1}
We first introduce assumptions and then provide the convergence rate of Algorithm \ref{al-1} for strongly convex objectives in Theorem \ref{the-1}. The norm used throughout the rest of the paper is $\ell_2$ norm.


\begin{assumption} \label{assum-2}
(Bounded gradients). For any model parameter $\omega$ and in the sequence $[{\omega _0},{\omega _1}, \ldots ,{\omega _t}, \ldots ]$, the norm of the gradient at every sample is bounded by a constant $\varepsilon_0^2$, i.e., $\forall \omega ,i,$, we have: $||\nabla {f_i}(\omega )|| \leqslant {\varepsilon_0^2}$.
\end{assumption}

\begin{assumption}\label{assum-3}
(Lipschitz continuity). We assume that the function $F:{\mathbb{R}^d} \to \mathbb{R}$ is $L$-Lipschitz continuous, i.e., $\forall \omega_1, \omega_2$, the following equation holds:
\begin{equation}
|F({\omega _1}) - F({\omega _2})| \leqslant L||{\omega _1} - {\omega _2}||.
\end{equation}
\end{assumption}
\begin{assumption}\label{assum-4}
(Strong convexity and smoothness). $F(\omega)$ is $\mu$-strongly convex and $\rho $-smooth with positive coefficient $\mu$ if $\forall {\omega _1},{\omega _2}$, the following equations hold:
\begin{equation}
F({\omega _1}) \geqslant F({\omega _2}) + \nabla F{({\omega _2})^ \top }({\omega _1} - {\omega _2}) + \frac{\mu }{2}||{\omega _1} - {\omega _2}|{|^2}.
\vspace{-0.5cm}
\end{equation}
\begin{equation}
F({\omega _1}) \leqslant F({\omega _2}) + \nabla F{({\omega _2})^ \top }({\omega _1} - {\omega _2}) + \frac{\rho }{2}||{\omega _1} - {\omega _2}|{|^2}.
\end{equation}
\end{assumption}
\begin{assumption}\label{assu-5}
\textit{The function $F:{\mathbb{R}^d} \to \mathbb{R}$ is twice continuously differentiable, $\rho$-smooth, and $\mu$-strongly convex,  $\mu >0$, i.e.,}
\begin{equation}\label{eq-14}
\mu I \leqslant {\nabla ^2}f(\omega ) \leqslant \rho I,
\end{equation}
\textit{where $I \in \mathbb{R}^d$ and ${\nabla ^2}f(\omega )$ is the Hessian of gradient.}
\end{assumption}
\begin{theorem}\label{the-1}
Suppose the objective function is strongly convex and smooth, and Assumption \ref{assum-2} holds. Thus, we have:
\begin{equation}\label{eq-16}
f(\omega _{t + 1}^u) - f({\omega ^*}) \leqslant  \varepsilon,
\end{equation}
where $\varepsilon  =  - \frac{\mu }{{2{\rho ^2}}}\varepsilon _0^2$.
\end{theorem}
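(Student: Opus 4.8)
The plan is to read Theorem~\ref{the-1} as a single-step descent estimate for the FIM-preconditioned (Newton-type) update in Eq.~\eqref{eq-11}, bounding the one-round decrease of the server objective $f$. The natural starting point is the spectral sandwich of Assumption~\ref{assu-5}, $\mu I \preceq \nabla^2 f(\omega) \preceq \rho I$, which controls both the true Hessian $H$ and, through Lemma~\ref{le-1} (asymptotic unbiasedness of the diagonal FIM as $\epsilon_t \to 0$), the preconditioner $\Gamma$ actually used in the step. Concretely, I would first argue that for large $t$ we may replace $\Gamma_{k_u}^{-1}$ by $H_{k_u}^{-1}$ up to a vanishing error, so that the implemented direction inherits the eigenvalue bounds $\tfrac{1}{\rho} I \preceq \Gamma^{-1} \preceq \tfrac{1}{\mu} I$. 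This reduces the problem to analyzing a stable preconditioned gradient step.

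Next I would apply the $\rho$-smoothness quadratic upper bound (the descent lemma) at the current iterate. Writing the step as $\omega_{t+1}^u - \omega_t^u = -\Gamma^{-1}\nabla f(\omega_t^u)$, with the batch factor $1/(B-\Delta B_t)$ absorbed into $\Gamma^{-1}$, smoothness gives
\[
f(\omega_{t+1}^u) - f(\omega_t^u) \leq -\nabla f(\omega_t^u)^\top \Gamma^{-1}\nabla f(\omega_t^u) + \tfrac{\rho}{2}\,\nabla f(\omega_t^u)^\top \Gamma^{-2}\nabla f(\omega_t^u).
\]
I would lower-bound the linear term by $\tfrac{1}{\rho}\|\nabla f\|^2$ via $\Gamma^{-1}\succeq \tfrac{1}{\rho}I$, and upper-bound the quadratic term via $\Gamma^{-1}\preceq \tfrac{1}{\mu}I$, so that the net coefficient in front of $\|\nabla f(\omega_t^u)\|^2$ collapses to a negative multiple of $\tfrac{\mu}{\rho^2}$. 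Substituting the bounded-gradient Assumption~\ref{assum-2}, $\|\nabla f_i(\omega)\| \leq \varepsilon_0^2$, and identifying the reference iterate $\omega_t^u$ with the trained model $\omega^*$ then produces the explicit constant $\varepsilon = -\tfrac{\mu}{2\rho^2}\varepsilon_0^2$.

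The hardest part will be making the approximation error rigorous: the clean Newton argument presumes $\Gamma^{-1}$ behaves like $H^{-1}$, whereas the update actually uses the Adam-style momentum ratio $m_t/v_t$ from Eq.~\eqref{eq-15}, where $v_t$ is a momentum-averaged square root of the diagonal FIM. I would control this by invoking Lemma~\ref{le-1} to push $\|\Gamma^{-1} - H^{-1}\|$ into a vanishing $o(1)$ term as $t\to\infty$, and arguing that in the strongly convex regime the first- and second-moment averages concentrate on the same diagonal preconditioner. A second subtlety is the sign: because the stated bound is strictly negative, the spectral estimates on the linear and quadratic terms must be chosen so their difference is genuinely negative (i.e. the effective step lies in the stable range where the quadratic overshoot does not cancel the first-order gain), and the gradient norm must be measured at the correct reference point. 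Finally, to state the guarantee on the aggregated model rather than a single client, I would sum the per-client descent inequalities with the FedAvg weights of Eq.~\eqref{eq-9} and use convexity of $f$ to transfer the bound to $\omega_{t+1}^u$.
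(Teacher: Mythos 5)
Your proposal follows essentially the same route as the paper's proof: a smoothness (descent-lemma) bound on the FIM-preconditioned Newton step, the spectral sandwich $\mu I \preceq H_t \preceq \rho I$ from Assumption~\ref{assu-5} to bound the preconditioned quadratic form, Lemma~\ref{le-1} to justify swapping the Hessian for the (diagonal) FIM, a step size in the stable range that yields the $-\frac{\mu}{2\rho^2}$ coefficient, the identification of the iterate with $\omega^*$ as $t \to \infty$, and Assumption~\ref{assum-2} to produce the explicit constant $\varepsilon = -\frac{\mu}{2\rho^2}\varepsilon_0^2$. The additional refinements you flag (controlling the $m_t/v_t$ momentum ratio and aggregating per-client bounds with FedAvg weights) go beyond what the paper's proof actually does, and you even note the same sign subtlety in applying the bounded-gradient assumption that the paper glosses over, so the core argument is the same.
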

\begin{proof}\label{proof-2}
First, since $f(\omega)$ is $\mu$-strongly convex, Eq. \eqref{eq-16} is equivalent to the following equation: 
\begin{equation*}f(\omega _t^u - {\eta _t}\Delta {\omega _t}) \leqslant f(\omega _t^u) - {\eta _t}\Delta _t^{\top}\Delta {\omega _t} + \frac{{{\eta _t}^2\rho ||\Delta {\omega _t}|{|^2}}}{2},
\end{equation*}
where $\Delta {\omega _t} = H_t^{ - 1}{\Delta _t}$ and $\Delta_t=\nabla f(w_t^u)$. Let $\phi ({\omega _t}) = \sqrt {\Delta _t^{\top}H_t^{ - 1}{\Delta _t}}$, thus, we have: $\phi {({\omega _t})^2} = \Delta \omega _t^{\top}(H_t)^{\top}\Delta {\omega _t} \geqslant \mu ||\Delta {\omega _t}|{|^2}$. Then we can obtain: 
\begin{equation*}
f(\omega _t^u - {\eta _t}\Delta {\omega _t}) \leqslant f(\omega _t^u) - {\eta _t}\phi {({\omega _t})^2} + \frac{\rho}{{2\mu}}\eta _t^2\phi {({\omega _t})^2}.
\end{equation*}
We assume that the step size ${\eta _t} = \frac{\mu }{\rho }$, thus, $f$ decreases as follows: \begin{equation*}
f(\omega _t^u - {\eta _t}\Delta {\omega _t}) \leqslant f({\omega ^*}) - \frac{1}{2}{\eta _t}\phi {({\omega _t})^2}.
\end{equation*}
If assumption \ref{assu-5} holds, i.e., $\mu  \leqslant {H_t} \leqslant \rho $ holds, thus, we have: $\phi {({\omega _t})^2} = {\eta _t}\Delta _t^{\top}H_t^{ - 1}{\Delta _t} \geqslant \frac{1}{\rho }||{\Delta _t}|{|^2}$. It can be seen from Lemma 1 that if we use FIM to approximate Hessian, the above inequality still holds. When $t \to \infty $, we have ${\omega ^*} \leftarrow {\omega _t}$, thus, the following inequality holds:
\begin{equation*}
f(\omega _t^u - {\eta _t}\Delta {\omega _t}) - f({\omega ^*}) - \frac{1}{{2\rho }}{\eta _t}||{\Delta _t}|{|^2} =  - \frac{\mu }{{2{\rho ^2}}}||{\Delta _t}|{|^2} \leqslant  - \frac{\mu }{{2{\rho ^2}}}\varepsilon _0^2.   
\end{equation*}
Furthermore, as long as Lemma 1 is true, the above inequality can be established regardless of whether the diagonal FIM or FIM is used to approximate the Hessian.
\end{proof}


\subsection{Complexity Analysis}\label{sec-5-2}
Here we give a formal analysis of the time and space complexity of the proposed algorithm. We define the number of parameters of the model as $p$, the number of samples is $n$, and the dimension of the feature is $d$.

\noindent \textbf{Time Complexity:} 
Let $f(p)$ be the time complexity of forward propagation, then the time complexity of one step backpropagation is at most $5f(p)$ \cite{griewank2008evaluating}, so the total complexity of computing the derivative of each training sample is $6f(p)$. Thus, the total time complexity of the baseline at the step is $6f(p)[{k_u}(B - \Delta B) + {k_c}B]$. If we use a block diagonal structure with a block size of $b$ and iterate over $(B - \Delta B)$ samples in the diagonal estimation, the total computational complexity of the proposed algorithm at the step is $6f(p)[{k_u}(B - \Delta B) + {k_c}B] + k\mathcal{O}(b(B - \Delta B)d)$. Suppose there are $T_b$ ($T_u$) iterations in the retraining process. Then the running time ${\rm T}_b$ of baseline method will be $6f(p)[{k_u}(B - \Delta B) + {k_c}B]{T_b}*{t_b}$. The proposed algorithm's total running time ${\rm T}_u$ is $\big[6f(p)[{k_u}(B - \Delta B) + {k_c}B] + k\mathcal{O}(b(B - \Delta B)d)\big]{T_u}*{t_u}$. Accordingly, let $v$ denote the speed-up factor which is defined as follows:
\begin{equation}
v = {(\frac{{{{\rm T}_u}}}{{{{\rm T}_b}}})^{ - 1}} = {[\frac{{{T_u}*{t_u}}}{{{T_b}*{t_b}}}(1 + \frac{{k\mathcal{O}(b(B - \Delta B)d)}}{{6f(p)[{k_u}(B - \Delta B) + {k_c}B]}})]^{ - 1}},
\end{equation}
where $t_u$ and $t_b$ are the running time of one round of training of Algorithm \ref{al-1} and the baseline, respectively. The cost of this estimation is one diagonal FIM (to compute $diag(\omega)$), which is equivalent to one gradient backpropagation~\cite{yao2020pyhessian,yao2018hessian}, i.e., $k\mathcal{O}(b(B-\Delta B)d) \approx 5kf(p)$, thus, we have: 
\begin{equation}
v \approx {[\frac{{{T_u}*{t_u}}}{{{T_b}*{t_b}}}(1 + \frac{{5k}}{{6[{k_u}(B - \Delta B) + {k_c}B]}})]^{ - 1}}.    
\end{equation}

\noindent \textbf{Space Complexity:}
In the proposed algorithm, we use the diagonal technique to approximate the Fisher information matrix and update it with the mini-batch gradients. Specifically, as we employ a block-diagonal structure with blocks of size $b$, it needs a memory of size $\mathcal{O}(db)$ to compute the estimated diagonal Fisher information \cite{yao2021adahessian}. As shown in our experiments, this allows us to support fairly large models and sample sets.

\begin{table}[!t]
\centering
\scriptsize
\caption{Datasets.}
\label{tab-1}
\begin{tabular}{@{}ccccc@{}}
\toprule
Dataset & Dimensions & Classes & Training & Test \\ \midrule
MNIST & 784 & 10 & 60,000 & 10,000 \\
Fashion-MNIST & 784 & 10 & 60,000 & 10,000 \\
CIFAR-10 & 1024 & 10 & 50,000 & 10,000 \\ 
CelebA & 784 & 2 & 160,000 & 40,000\\ \midrule
\end{tabular}
		\vspace{-0.5cm}
\end{table}

\begin{table}[!t]
\centering
\scriptsize
\caption{Experimental results of model utility and $\mathrm{SAPE}$.}
\label{tab-2}
\begin{tabular}{|c|c|c|c|c|}
\hline
\multicolumn{2}{|c|}{Dataset} & Baseline (\%) & Ours (\%) & $\mathrm{SAPE}$ \\ \hline
\multirow{4}{*}{\begin{tabular}[c]{@{}c@{}}Delete\\ (2\%)\end{tabular}} & MNIST & 98.39±0.04 & 98.02±0.01 & 1.884 $\times {10^{ - 3}}$ \\ \cline{2-5} 
 & FMNIST & 90.96±0.02 & 90.02±0.04 & 5.194 $\times {10^{ - 3}}$ \\ \cline{2-5} 
 & CIFAR-10 & 67.33±0.07 & 67.75±0.05 & 3.111 $\times {10^{ - 3}}$ \\ \cline{2-5} 
 & CelebA & 96.41±0.04 & 95.98±0.04 & 2.235 $\times {10^{ - 3}}$ \\ \hline
\multirow{4}{*}{\begin{tabular}[c]{@{}c@{}}Delete\\ (1.5\%)\end{tabular}} & MNIST & 98.42±0.02 & 98.12±0.01 & 1.526 $\times {10^{ - 3}}$ \\ \cline{2-5} 
 & FMNIST & 91.45±0.02 & 91.32±0.05 & 7.113 $\times {10^{ - 4}}$ \\ \cline{2-5} 
 & CIFAR-10 & 68.88±0.04 & 68.74±0.03 & 1.017 $\times {10^{ - 3}}$ \\ \cline{2-5} 
 & CelebA & 97.01±0.04 & 96.50±0.04 & 2.636 $\times {10^{ - 3}}$ \\ \hline
\multirow{4}{*}{\begin{tabular}[c]{@{}c@{}}Delete\\ (1\%)\end{tabular}} & MNIST & 98.98±0.08 & 99.02±0.06 & 2.020 $\times {10^{ - 4}}$ \\ \cline{2-5} 
 & FMNIST & 91.65±0.01 & 91.04±0.08 & 3.339 $\times {10^{ - 3}}$ \\ \cline{2-5} 
 & CIFAR-10 & 69.27±0.09 & 68.75±0.02 & 3.768 $\times {10^{ - 3}}$ \\ \cline{2-5} 
 & CelebA & 97.04±0.02 & 97.11±0.04 & 3.605 $\times {10^{ - 4}}$ \\ \hline
\multirow{4}{*}{\begin{tabular}[c]{@{}c@{}}Delete\\ (0.5\%)\end{tabular}} & MNIST & 98.98±0.08 & 99.02±0.06 & 2.020 $\times {10^{ - 4}}$ \\ \cline{2-5} 
 & FMNIST & 91.67±0.03 & 91.02±0.03 & 3.558 $\times {10^{ - 3}}$ \\ \cline{2-5} 
 & CIFAR-10 & 69.35±0.06 & 68.75±0.05 & 4.345 $\times {10^{ - 3}}$ \\ \cline{2-5} 
 & CelebA & 97.31±0.02 & 97.12±0.03 & 9.772 $\times {10^{ - 4}}$ \\ \hline
\end{tabular}
		\vspace{-0.4cm}
\end{table}

\begin{figure}[!t]
 \centering
 \includegraphics[width=0.65\linewidth]{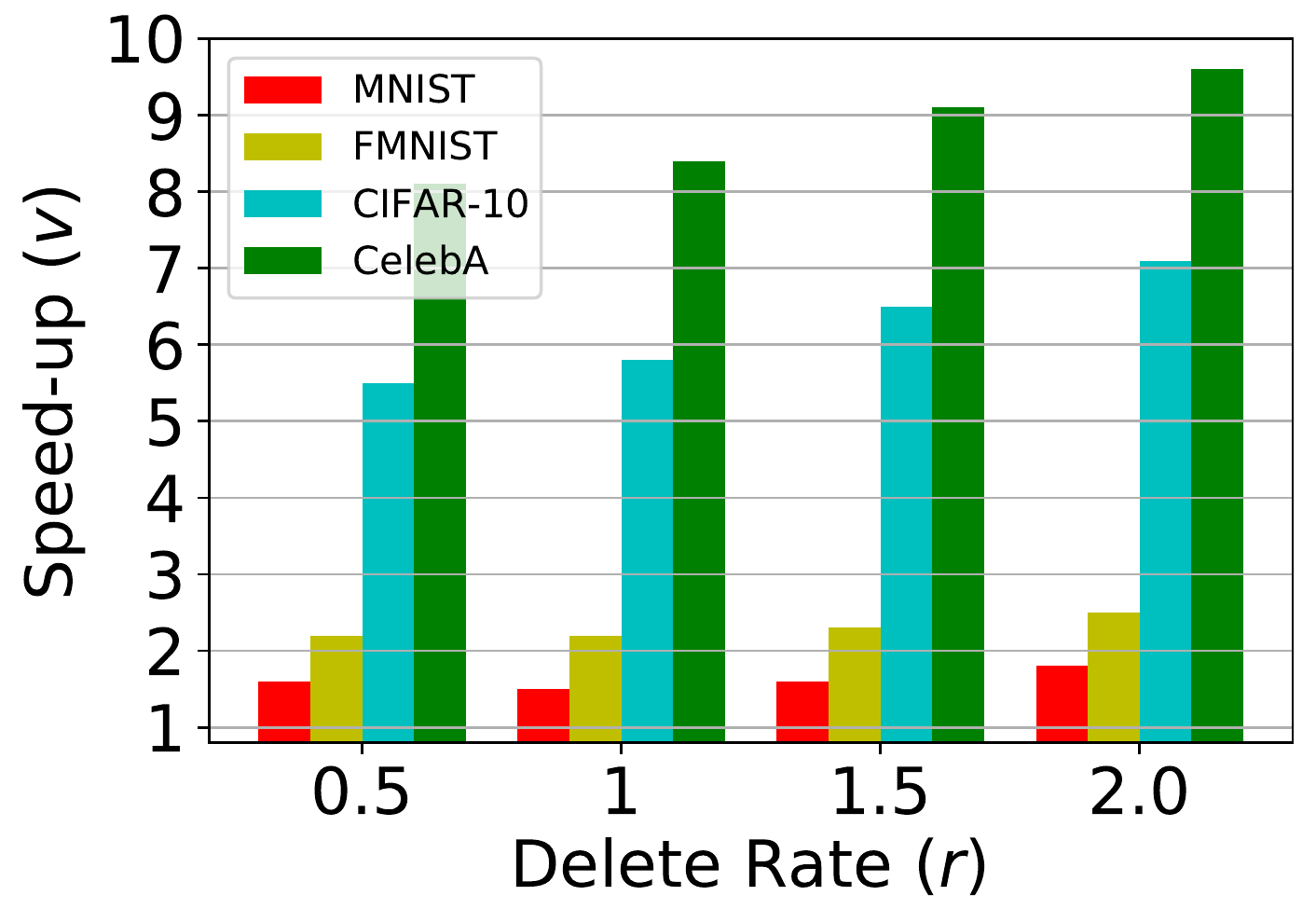}
 \caption{The effect of the data deletion rate $r$ on the unlearning efficiency of the proposed algorithm.}
  \label{fig-2}
  		\vspace{-0.5cm}
\end{figure}

\begin{figure}[!t]  
	\centering
	\subfigure{
	\includegraphics[width=0.41\linewidth]{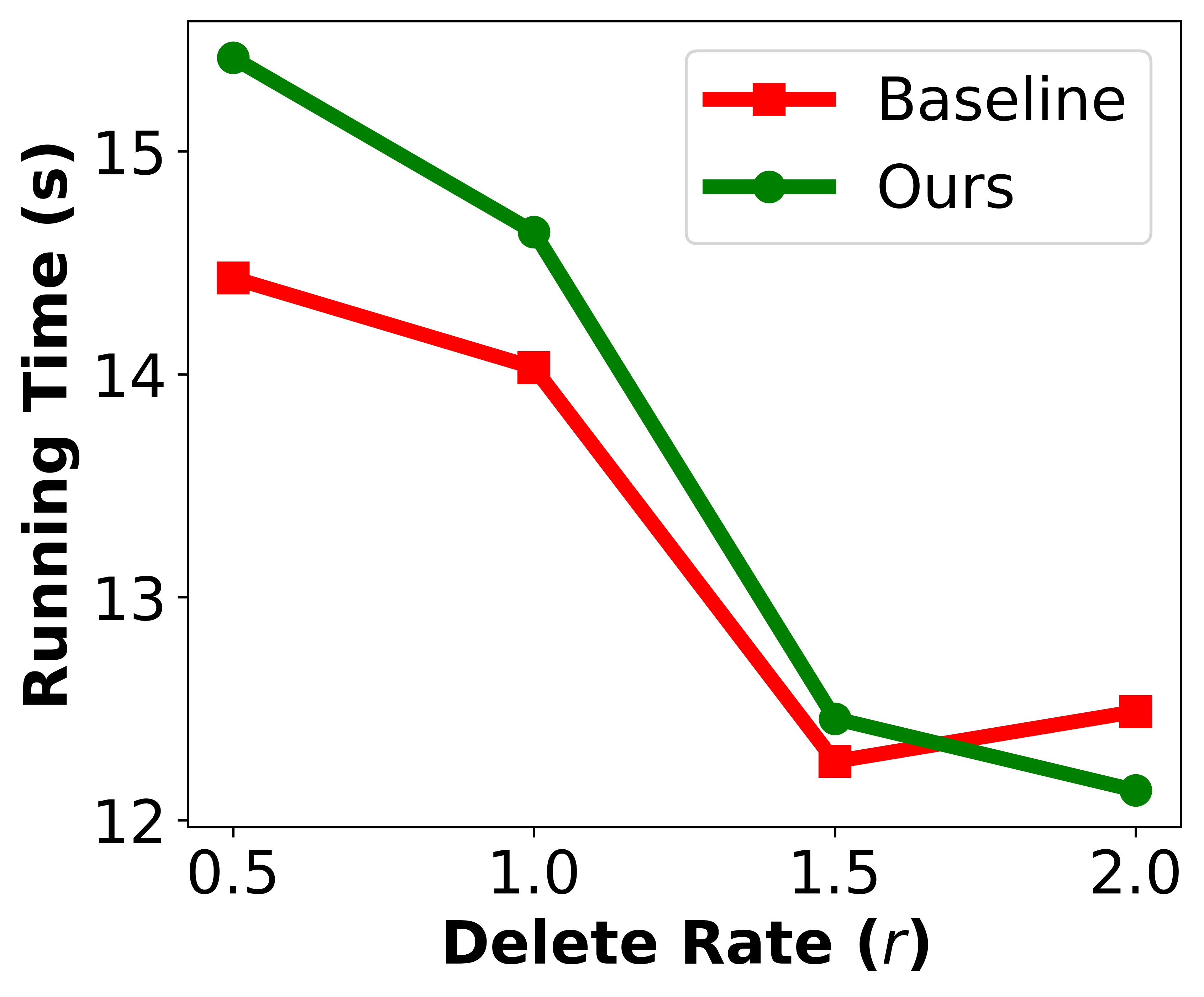}
	}	
	\subfigure{
	\includegraphics[width=0.41\linewidth]{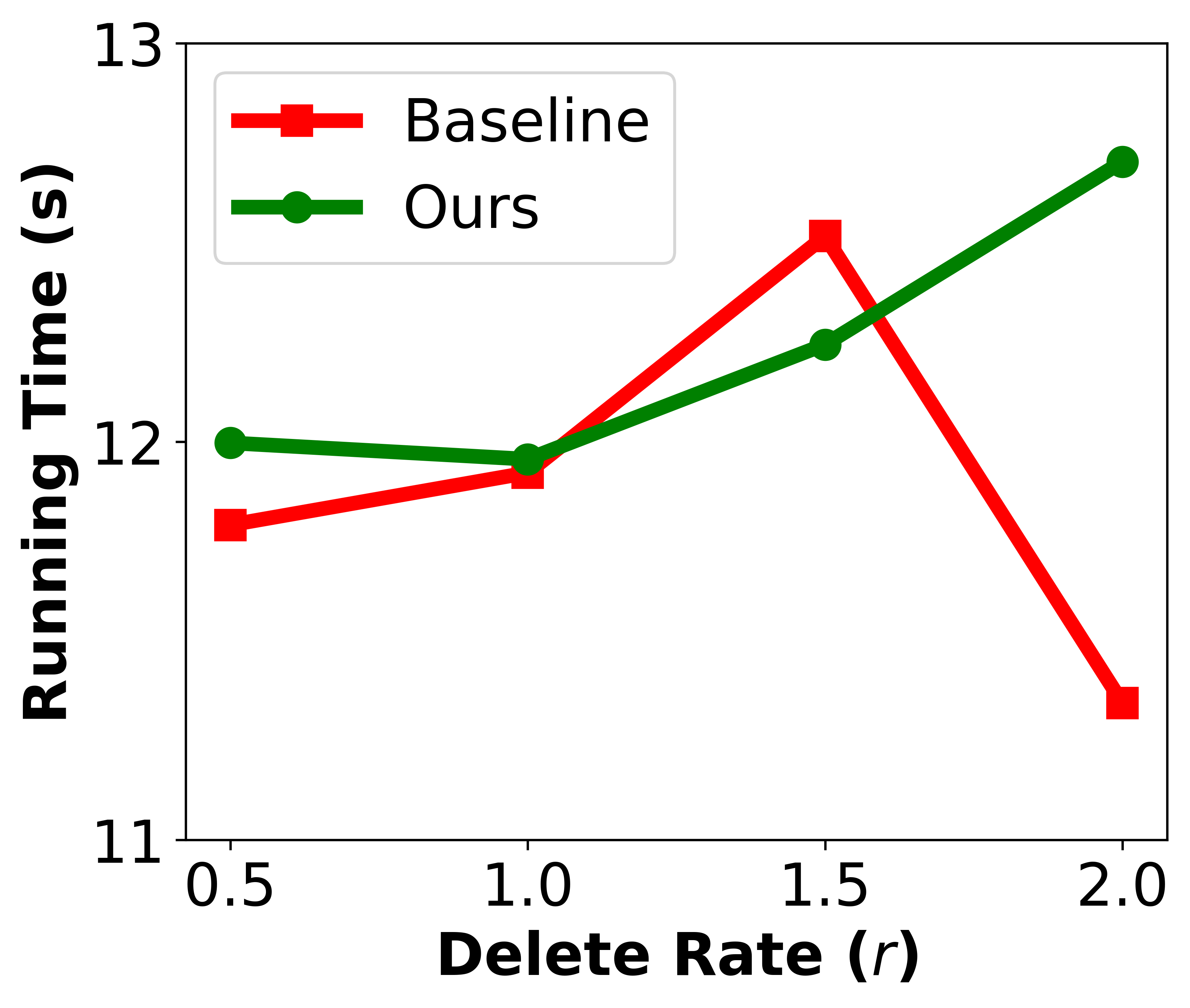}
	}	
	\subfigure{
	\includegraphics[width=0.41\linewidth]{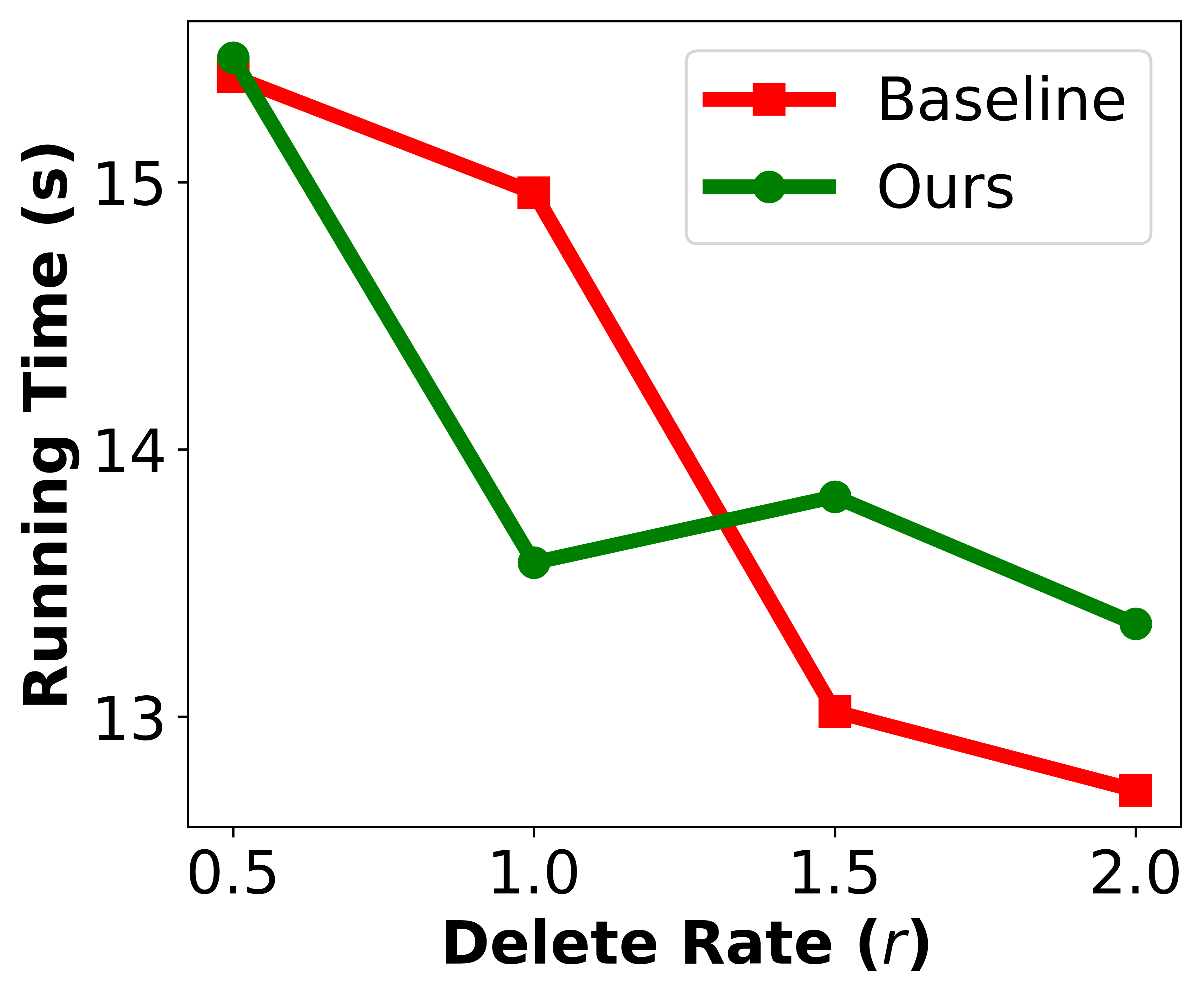}
	}	
	\subfigure{
\includegraphics[width=0.41\linewidth]{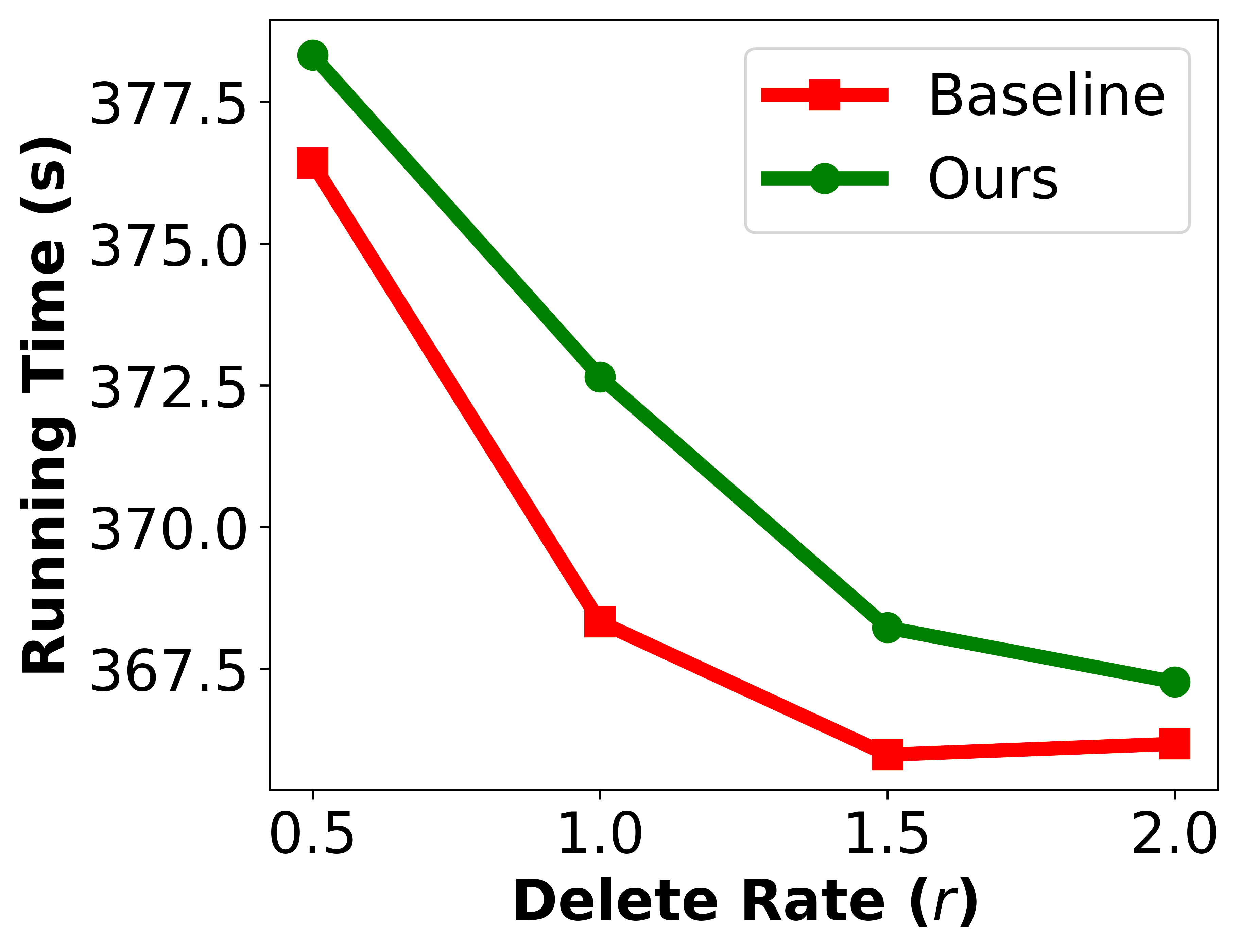}
	}
	\caption{Comparison of the running time of each round of the baseline retraining and our algorithm.}
	\label{fig-3}	
		\vspace{-0.5cm}
\end{figure}

\begin{figure}[!t]  
	\centering
	\subfigure{
	\includegraphics[width=0.41\linewidth]{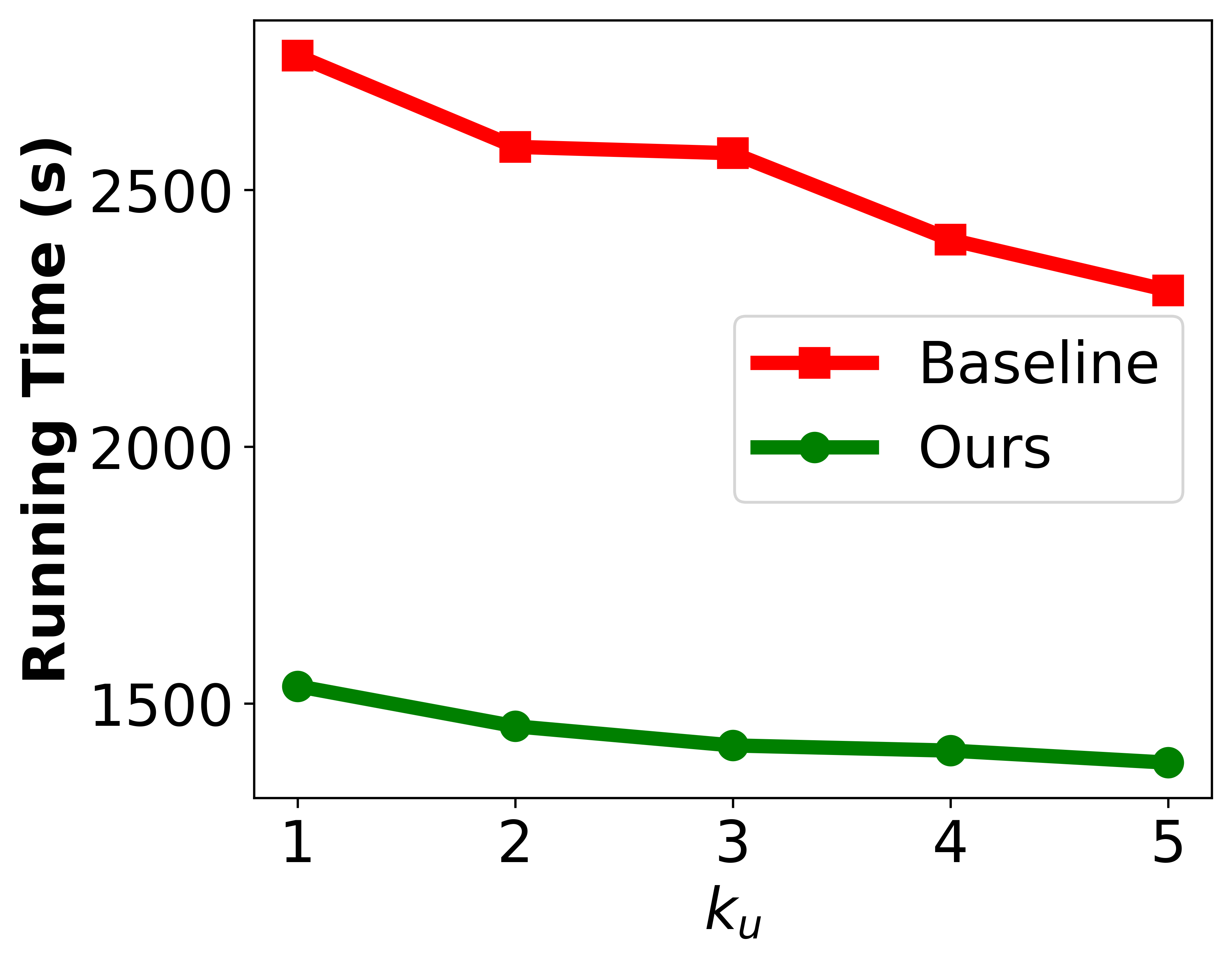}
	}	
	\subfigure{
	\includegraphics[width=0.41\linewidth]{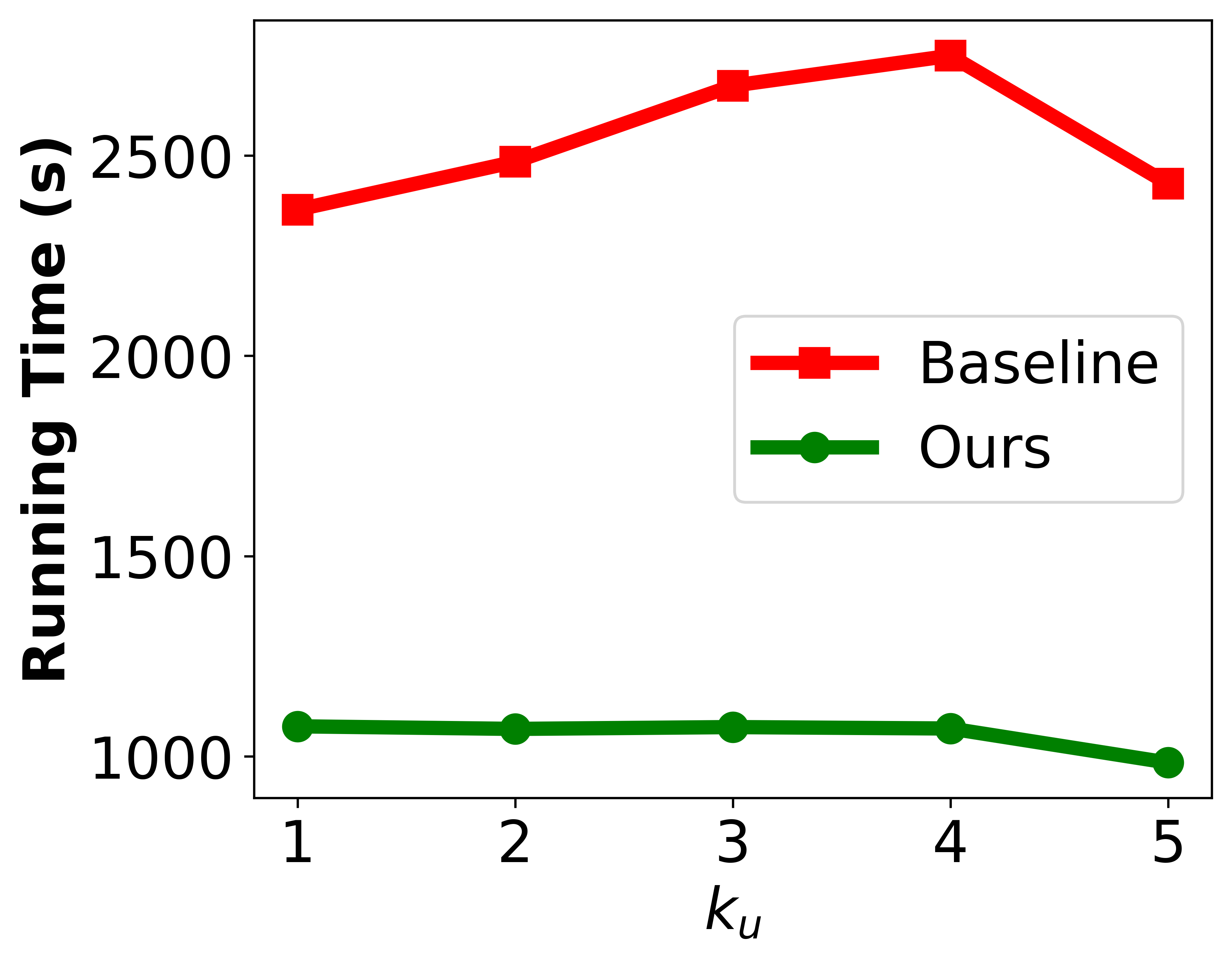}
	}	
	\subfigure{
	\includegraphics[width=0.41\linewidth]{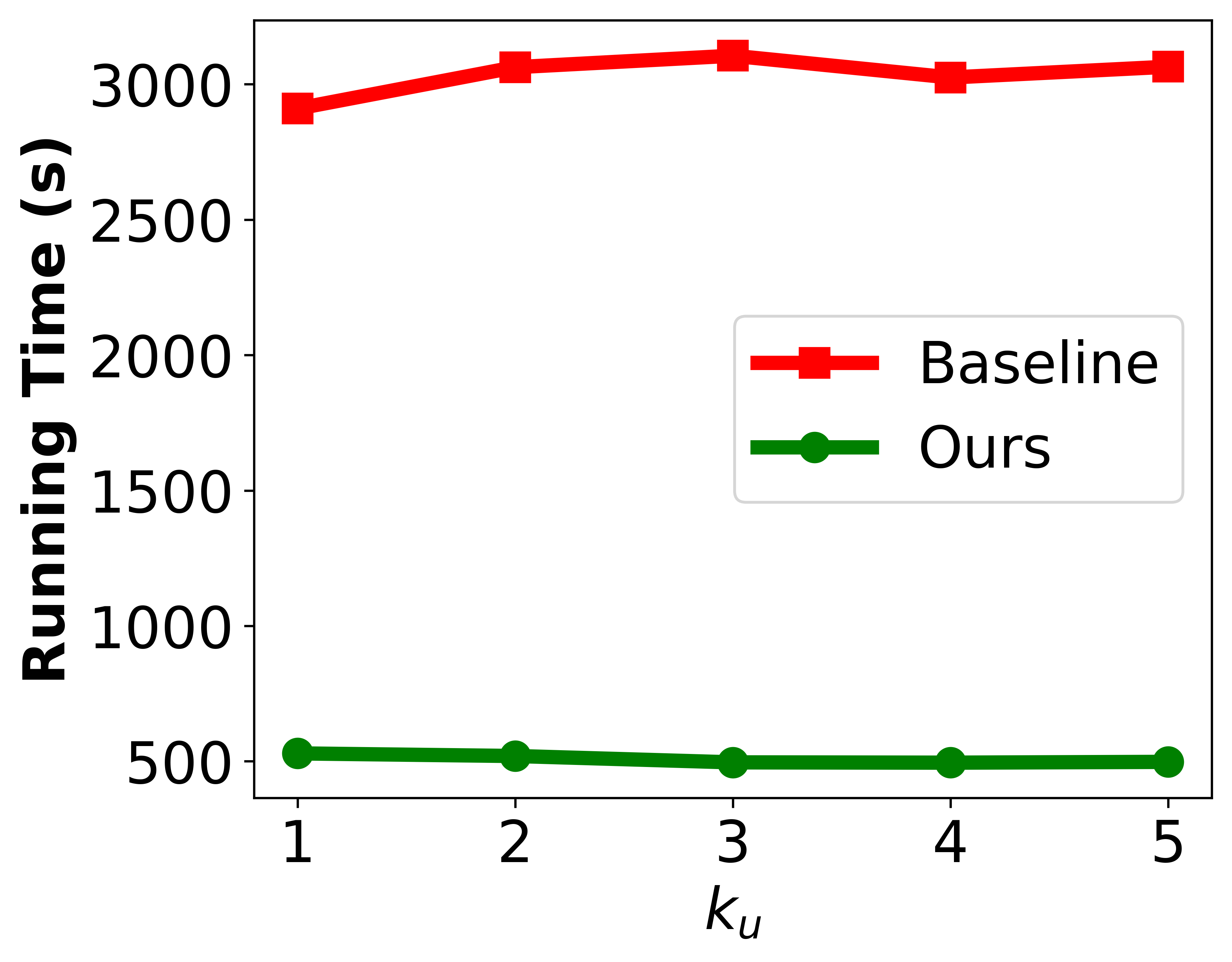}
	}	
	\subfigure{
\includegraphics[width=0.41\linewidth]{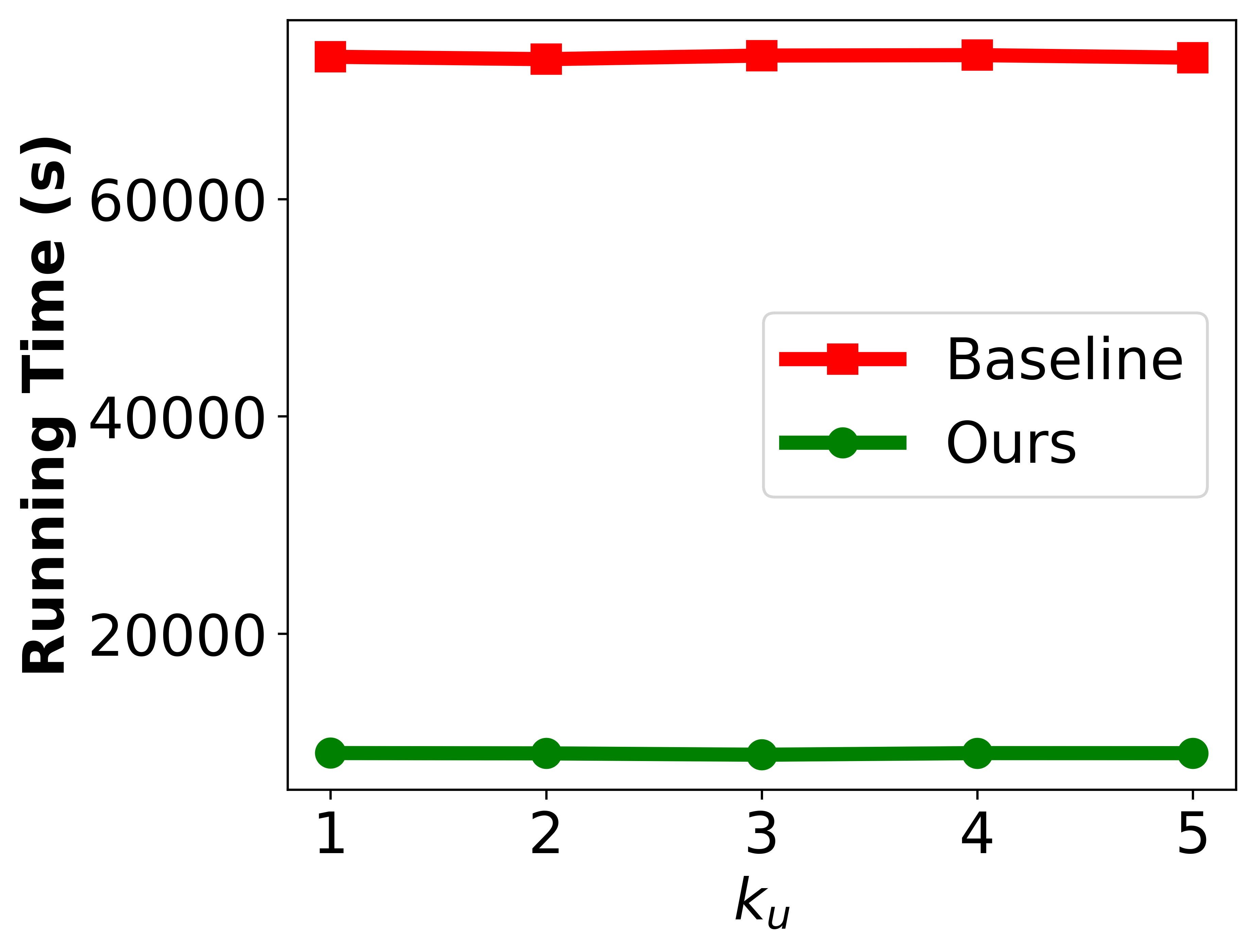}
	}
	\caption{Comparison of the total running time of the baseline retraining and our algorithm under different $k_u$ settings.}
	\label{fig-4}	
		\vspace{-0.5cm}
\end{figure}

\begin{figure}[!t]  
	\centering
	\subfigure{
	\includegraphics[width=0.41\linewidth]{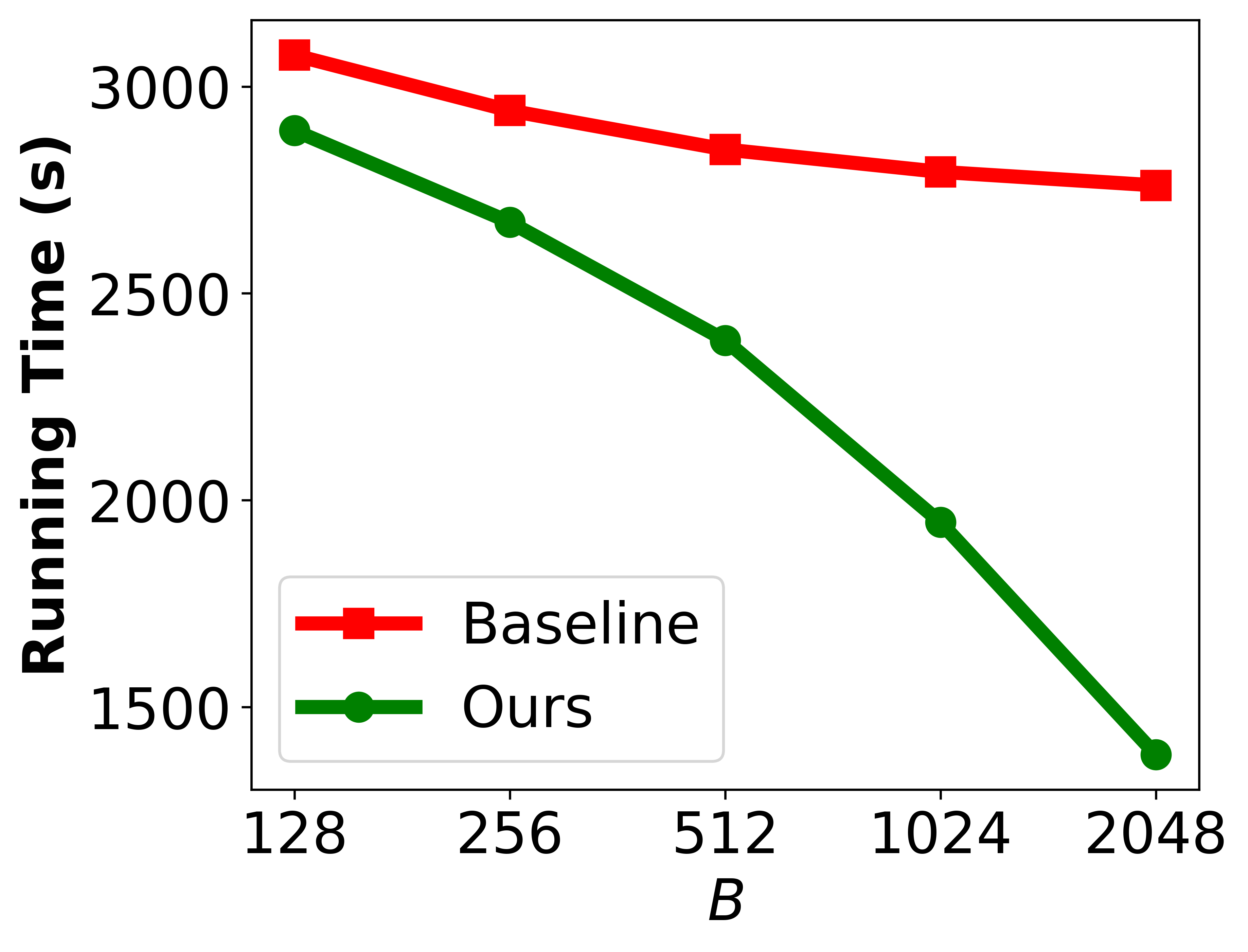}
	}	
	\subfigure{
	\includegraphics[width=0.41\linewidth]{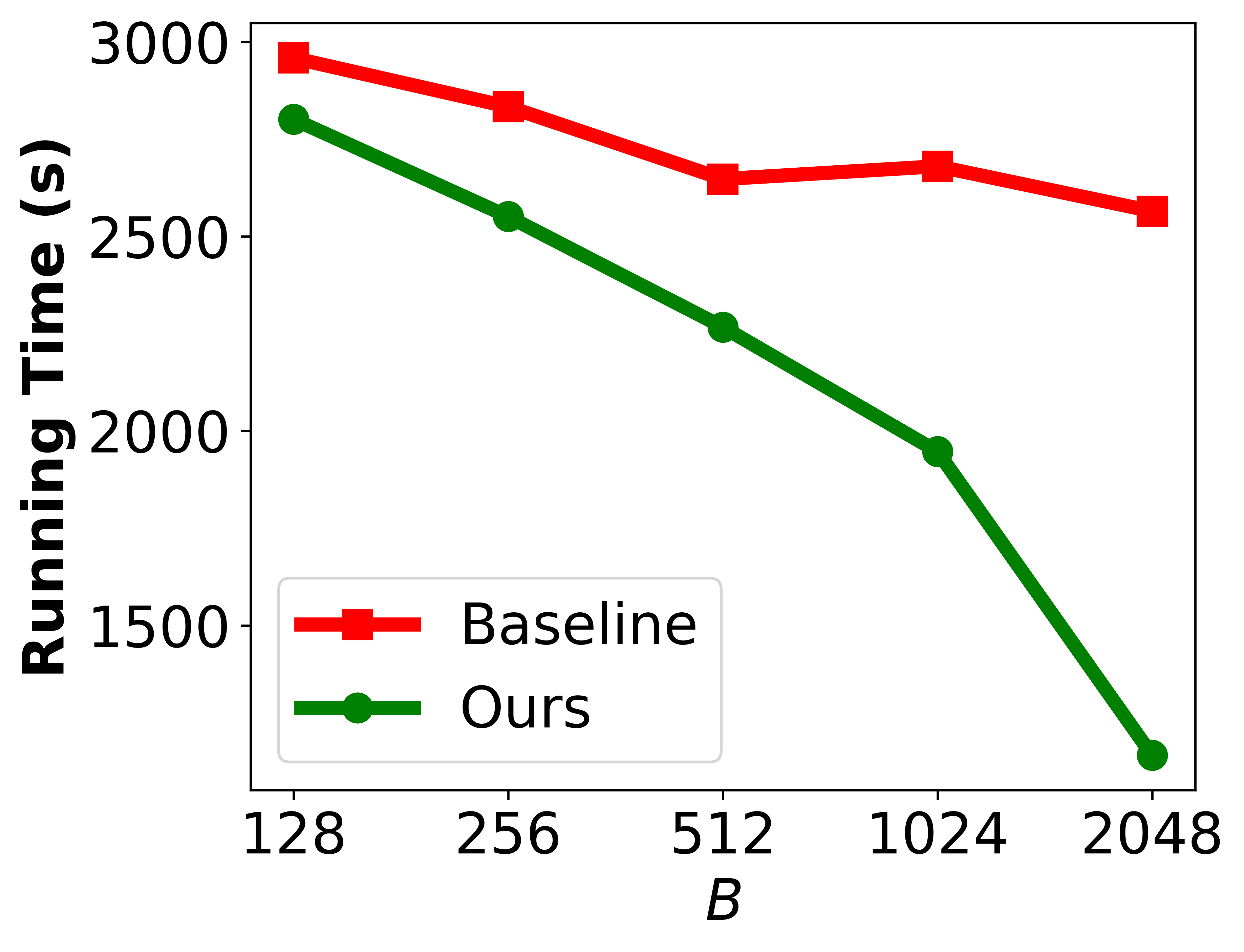}
	}	
	\subfigure{
	\includegraphics[width=0.41\linewidth]{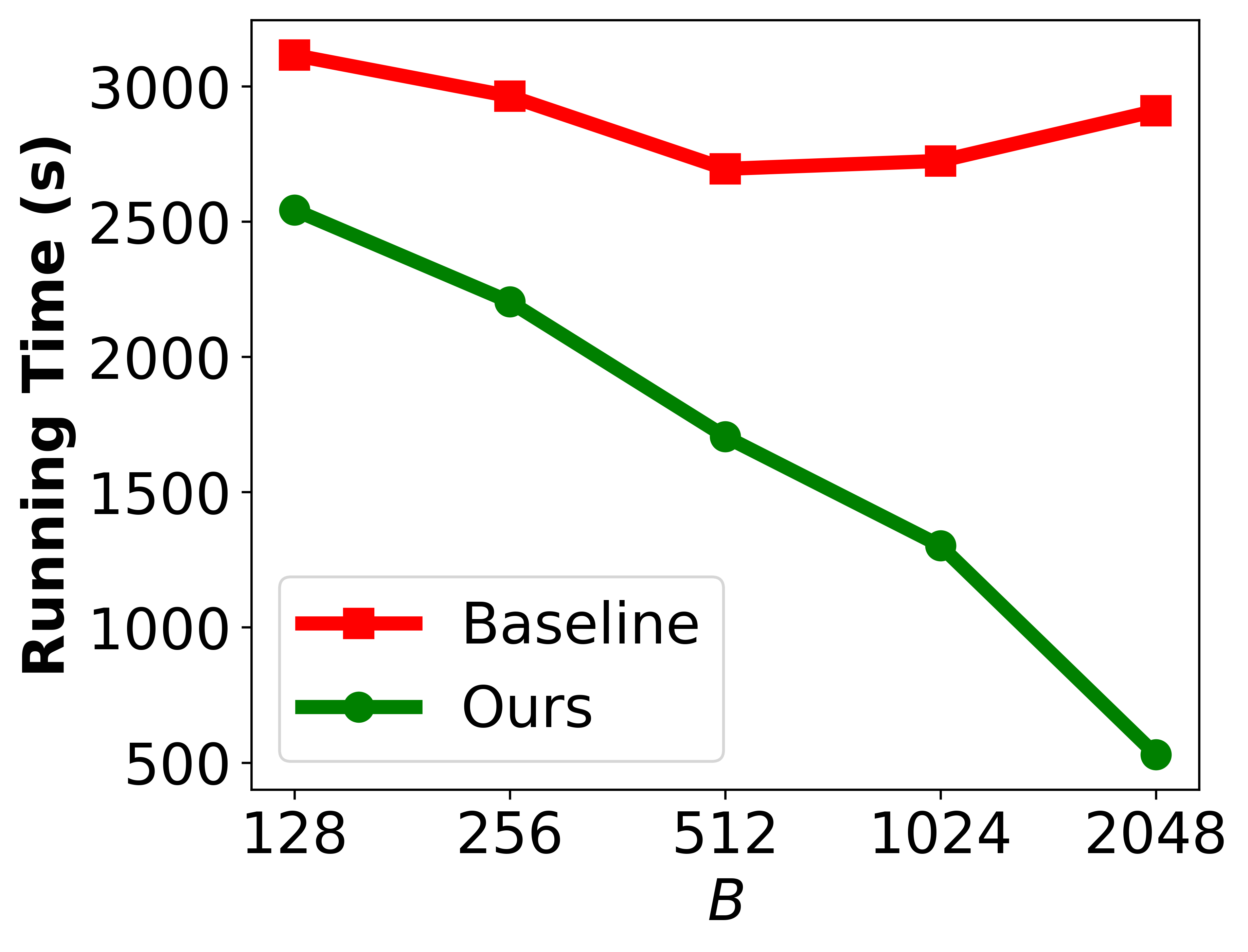}
	}	
	\subfigure{
\includegraphics[width=0.41\linewidth]{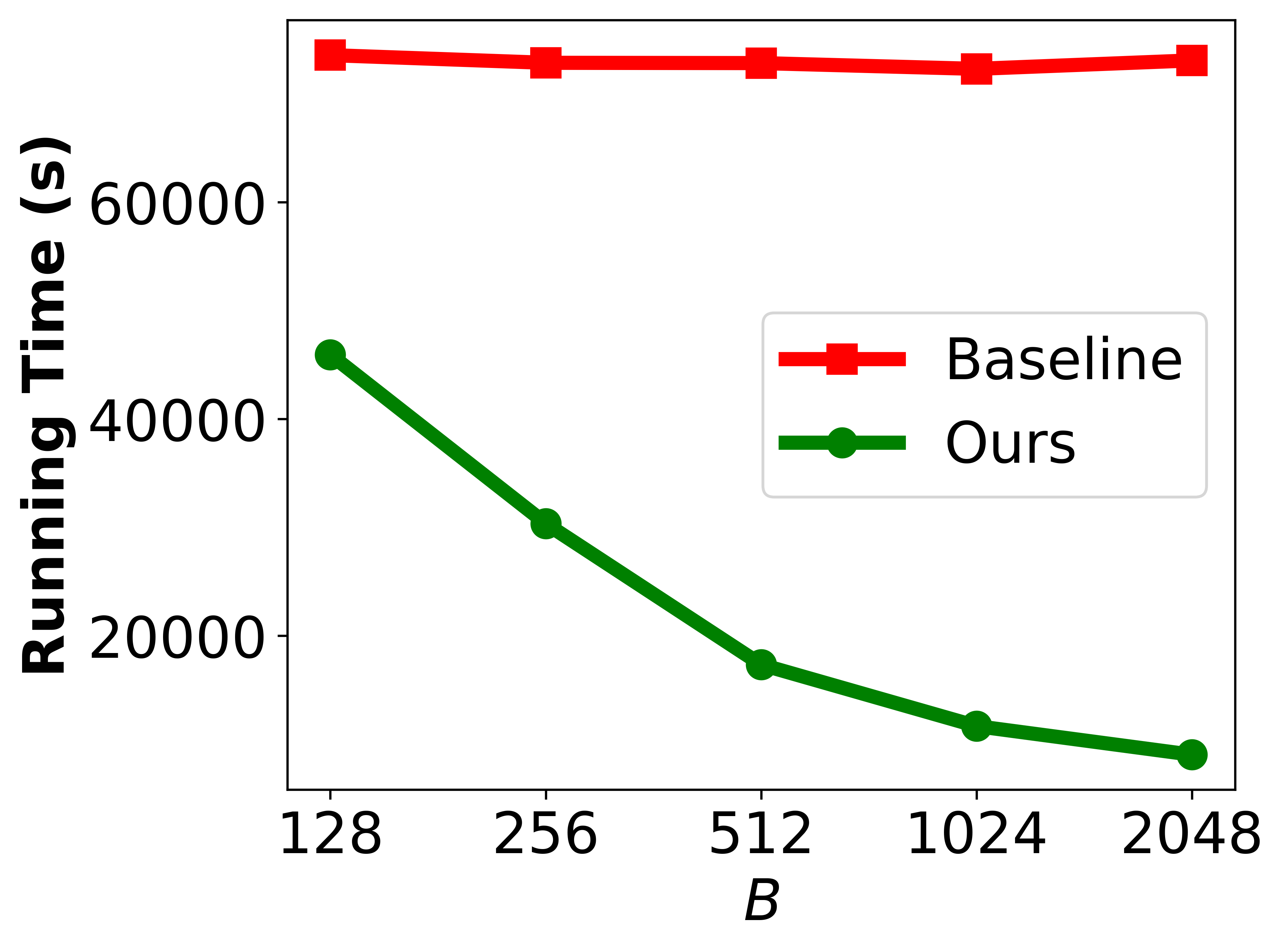}
	}
	\caption{Comparison of the total running time of the baseline retraining and our algorithm under different $B$ settings.}
	\label{fig-5}
	\vspace{-0.5cm}
\end{figure}

\section{Experiments}\label{sec-6}
\subsection{Experiment Setup}\label{sec-6-1}
To evaluate the performance of our proposed design, we conduct extensive experiments on four representative public datasets. All experiments were developed using Python 3.7 and PyTorch 1.7, and executed on a server with an NVIDIA GeForce RTX2080 Ti GPU and an Intel Xeon Silver 4210 CPU. 

\noindent \textbf{Datasets:} In this paper, we adopt four real-world image datasets for evaluations, i.e., MNIST\footnote{\href{http://yann.lecun.com/exdb/mnist/}{http://yann.lecun.com/exdb/mnist/}}, Fashion-MNIST\footnote{\href{https://github.com/zalandoresearch/fashion-mnist}{https://github.com/zalandoresearch/fashion-mnist}}, CIFAR-10\footnote{\href{https://www.cs.toronto.edu/~kriz/cifar.html}{https://www.cs.toronto.edu/~kriz/cifar.html}}, and CelebA\footnote{\href{https://mmlab.ie.cuhk.edu.hk/projects/CelebA.html}{https://mmlab.ie.cuhk.edu.hk/projects/CelebA.html}} (a.k.a. Federated LEAF \cite{caldas2018leaf} dataset). The datasets cover different attributes, dimensions, and number of categories, as shown in Table \ref{tab-1}, allowing us to effectively explore the unlearning utility of the proposed algorithm. To simulate the real environment settings of FL, we evenly distribute the four training datasets to all clients.

\noindent \textbf{Models:} In this experiment, we use a simple CNN model, i.e., CNN with 2 convolutional layers followed by 1 fully connected layer for classification tasks on the MNIST dataset and Fashion-MNIST dataset, the AlexNet model for classification tasks on the CIFAR-10 dataset, and the ResNet-18 model for classification tasks on the CelebA dataset. In particular, we performed a gender classification task on the CelebA dataset.

\noindent \textbf{Hyperparameters:} In our design, we consider the cross-silo FL scenario. We set the number of clients $K=10$, proportion of client participation $q=1$, local epoch ${E_{local}}=1$, mini-batch size $B \in \{ 128,256,512,1024,2048\}$, learning rate $\eta  = 0.001$, delete rate $r = \{ 2\% ,1.5\% ,1\% ,0.5\% \}$, training round $T=200$, the block size $b=3$, and the momentum parameter $\beta_1=0.9$, $\beta_2=0.999$.

\noindent \textbf{Federated Unlearning Pipeline:} First, we follow the above hyperparameters setting to start the training stage in the FL pipeline. Then, we use the obtained model $\omega^*$ for inference or unlearning. Second, if the clients initiate several data erasure requests, the server will start the unlearning stage after receiving all the requests to forget the contribution of the erased data. Notably, we define the percentage of client-side deleted data to all training data as the deletion rate $r$. Finally, the server reinitializes the global model $\omega^*$ and uses Algorithm \ref{al-1} or the baseline algorithm (see below) to perform rapid retraining to obtain the new model $\omega^u$.

\noindent \textbf{Baseline:} \textbf{Retraining from Scratch.} This method is to delete the erased training samples and to retrain the FL model from scratch by using the remaining dataset as the training dataset. 

\noindent \textbf{Evaluation Metrics:} First, we will evaluate the efficiency of the proposed algorithm, and the speed-up of our retraining algorithm running time is defined: $v  = \frac{{{\mathrm{T}_b}}}{{{\mathrm{T}_u}}}\mathrm{x}$. Second, to fairly compare model utility, we report the performance achieved by a given model relative to the performance of the baseline model. To this end, we use the Symmetric Absolute Percentage Error (SAPE) defined as: $\varepsilon_s = \mathrm{SAPE}(Acc_{test}^*,Acc_{test}^u) = \frac{{|Acc_{test}^u - Acc_{test}^*|}}{{|Acc_{test}^*| + |Acc_{test}^u|}}$, where $Acc_{test}^*$ denotes the accuracy of the model $\omega^*$ obtained by the baseline algorithm on the test dataset $\mathcal{D}_{test}$, and $Acc_{test}^u$ denotes the accuracy of the model $\omega^u$ obtained by the proposed algorithm on the same dataset.

\subsection{Evaluation}\label{sec-6-2}
Our evaluation is four-fold: \textit{(i)} To show the data erasure efficiency of Algorithm \ref{al-1}, we report the speed-up of the proposed algorithm; \textit{(ii)} To prove the effectiveness of the proposed algorithm, we compare it with the baseline in terms of accuracy and $\mathrm{SAPE}$ under different delete rates; \textit{(iii)} To explore the parameter sensitivity of the proposed algorithm, we conduct extensive case studies around the parameters $k_u$ and $B$. \textit{(iv)} To answer whether our algorithm can effectively realize unlearning, we conduct an analysis with experimental results. 

First, we evaluate the data erasure efficiency of the proposed algorithm under different deletion rates $r$ settings. Specifically, we fixed the number of unlearned clients $k_u=1$, the learning rate $\eta=0.001$, and batch size $B=2048$. Then, we choose the deletion rate $r = \{ 2\% ,1.5\% ,1\% ,0.5\% \}$ to explore the data erasure efficiency of Algorithm \ref{al-1}. As shown in Fig. \ref{fig-2} and Fig. \ref{fig-3}, we highlight the superiority of the proposed algorithm by computing the speed-up factor $v$ and the running time of each round. For each dataset, it can be clearly seen that the data erasure of the proposed algorithm is always higher than that of the baseline method, which is because that our algorithm utilizes gradient information and curvature information to find a ``better'' descent direction, thereby reducing the time for retraining. We can also see that the data erasure ability of our algorithm is more efficient on large datasets (i.e., CIFAR-10 ($v = 7.1\mathrm{x}$) and CelebA ($v = 9.6\mathrm{x}$). The reason is that if the baseline method uses mini-batch SGD to retrain on a large dataset, this requires multiple recalculations of gradient information for the same sample, which greatly increases the cost of retraining. In Fig. \ref{fig-3}, the running time of each round of the proposed algorithm is very close to that of the baseline, which is in line with the results of the time and space complexity analysis in Section \ref{sec-5-2}. This experiment also shows that Algorithm \ref{al-1} meets the goal \textbf{G4} and \textbf{G5}.

Second, Table \ref{tab-2} shows the model utility of Algorithm \ref{al-1} and the baseline algorithm. In addition, we also report the $\mathrm{SAPE}$ between the accuracy of the unlearned models produced by the two algorithms. Experimental results show that the performance of the proposed algorithm is comparable to that of the baseline, i.e., meet the goal \textbf{G3}. For example, when $r=0.5\%$, the performance of Algorithm \ref{al-1} on the MNIST dataset is slightly higher than the baseline. Furthermore, when we increase $r$, the performance of the proposed algorithm can still be comparable to the baseline. Obviously, the baseline method achieves stable high performance at the expense of retraining costs. Although our algorithm uses diagonal empirical FIM to approximate the Hessian matrix and damages the model utility, the use of the momentum technique makes up for this shortcoming. Similarly, this shows that our method satisfies the goal \textbf{G3}.

Third, we report the sensitivity of Algorithm \ref{al-1} to the parameters $k_u$ and $B$. We use the controlled variable method to analyze the influence of the parameters $k_u$ and $B$ on the data erasure efficiency of the algorithm. Specifically, when we explore the influence of parameter $k_u$, we fix the deletion rate $r=1\%$ and $B=2048$; on the contrary, when exploring the parameter $B$, we fix the deletion rate $r=1\%$ and $k_u=1$. Fig. \ref{fig-4} and Fig. \ref{fig-5} present the total running time of the two algorithms, where Fig. \ref{fig-3} shows that $k_u$ has no obvious effect on speed-up; Fig. \ref{fig-4} shows that $B$ has a significant impact on speed-up. Recall that these experimental results are consistent with the results of the analysis of time and space complexity in Section \ref{sec-5-2}. Notably, the unlearning efficiency of Algorithm \ref{al-1} is significantly higher than the baseline method in the case of large batch size. The reason is that in this case, this algorithm can utilize more samples to compute a more accurate diagonal empirical FIM, thereby quickly realizing unlearning.

Last, we conduct an analysis with experimental results to show the data erasure effectiveness of the proposed algorithm. Concretely, we assume that $\{ ({x_i},{y_i})\} _{i = 1}^n$ is the input sample randomly sampled from the original dataset, where ${x_i} \sim \mathcal{N}(0,{\sigma ^2})$, and $\{ (x_i^u,y_i^u)\} _{i = 1}^n$ is the input sample randomly sampled from the deleted dataset, where ${x_i^u} \sim \mathcal{N}(0,{\sigma ^2})$. Suppose $f(x_i)$ is the output of the global model $\omega^*$ on the original dataset, and $f_u(x_i^u)$ is the output of the global model $\omega^u$ after unlearning on the deleted dataset. If the distribution of model $\omega^u$ and the distribution of model $\omega^*$ are identical, then the following Equation holds: ${d _u} = \sum\limits_{i = 1}^n {||f({x_i}) - {f_u}(x_i^u)||}  \approx 0,n \to \infty .$ In fact, the distance $d_u$ in the above equation is equivalent to the evaluation index $\mathrm{SAPE}$ designed by us. For example, for the CelebA dataset, the $\mathrm{SAPE}$ value of our Algorithm \ref{al-1} is 9.772 $\times 10^{-4}$, which shows that the model generated by this algorithm and the model generated by the baseline are approximately identical. Therefore, these results have shown that the proposed algorithm meets the requirements of Definition \ref{defi-5} and goals \textbf{G1}, and \textbf{G2}.

\section{Conclusion}
In this paper, we present a federated unlearning algorithm, providing efficient and effective unlearning services for data holders in FL. Specifically, we leverage the first-order Taylor expansion approximation technique to customize a rapid retraining algorithm based on diagonal experience FIM. For boosting model utility, we introduce the momentum technique into the unlearning update strategy to further alleviate the negative impact caused by approximation errors. Furthermore, we provide a comprehensive theoretical analysis and experimental support for the proposed algorithm. 
%
%
The performance advantage of our design has been empirically proven on some medium-scale public datasets, revealing its great potential in building data deletion FL services for various applications. 

\section*{ACKNOWLEDGMENT}
This work was supported in part by the Research Grants Council (RGC) of Hong Kong under Grants 11217819, 11217620, 16207818, 16209120, N\_CityU139/21, and R6021-20F; in part by the National Natural Science Foundation of China under Grant 62072240; in part by the Natural Science Foundation of Jiangsu Province under Grant BK20210330; in part by Shenzhen Municipality Science and Technology Innovation Commission under Grant No. SGDX20201103093004019, CityU; and in part by the Australian Research Council (ARC) Discovery Projects under Grant DP200103308.

\bibliographystyle{IEEEtran}

\bibliography{ref}

\begin{thebibliography}{10}
\providecommand{\url}[1]{#1}
\csname url@samestyle\endcsname
\providecommand{\newblock}{\relax}
\providecommand{\bibinfo}[2]{#2}
\providecommand{\BIBentrySTDinterwordspacing}{\spaceskip=0pt\relax}
\providecommand{\BIBentryALTinterwordstretchfactor}{4}
\providecommand{\BIBentryALTinterwordspacing}{\spaceskip=\fontdimen2\font plus
\BIBentryALTinterwordstretchfactor\fontdimen3\font minus
  \fontdimen4\font\relax}
\providecommand{\BIBforeignlanguage}[2]{{%
\expandafter\ifx\csname l@#1\endcsname\relax
\typeout{** WARNING: IEEEtran.bst: No hyphenation pattern has been}%
\typeout{** loaded for the language `#1'. Using the pattern for}%
\typeout{** the default language instead.}%
\else
\language=\csname l@#1\endcsname
\fi
#2}}
\providecommand{\BIBdecl}{\relax}
\BIBdecl

\bibitem{bourtoule2019machine}
L.~Bourtoule, V.~Chandrasekaran, C.~A. Choquette-Choo, H.~Jia, A.~Travers,
  B.~Zhang, D.~Lie, and N.~Papernot, ``Machine unlearning,'' in \emph{Proc. of
  IEEE S\&P}, 2019.

\bibitem{voigt2017eu}
P.~Voigt and A.~Von~dem Bussche, ``The eu general data protection regulation
  (gdpr),'' \emph{A Practical Guide, 1st Ed., Cham: Springer International
  Publishing}, vol.~10, p. 3152676, 2017.

\bibitem{neel2021descent}
S.~Neel, A.~Roth, and S.~Sharifi-Malvajerdi, ``Descent-to-delete:
  Gradient-based methods for machine unlearning,'' in \emph{Algorithmic
  Learning Theory}, 2021.

\bibitem{koh2017understanding}
P.~W. Koh and P.~Liang, ``Understanding black-box predictions via influence
  functions,'' in \emph{Proc. of ICML}, 2017.

\bibitem{baumhauer2020machine}
T.~Baumhauer, P.~Sch{\"o}ttle, and M.~Zeppelzauer, ``Machine unlearning: Linear
  filtration for logit-based classifiers,'' \emph{arXiv preprint
  arXiv:2002.02730}, 2020.

\bibitem{ginart2019making}
A.~A. Ginart, M.~Y. Guan, G.~Valiant, and J.~Zou, ``Making ai forget you: data
  deletion in machine learning,'' in \emph{Proc. of NeurIPS}, 2019.

\bibitem{guo2019certified}
C.~Guo, T.~Goldstein, A.~Hannun, and L.~Van Der~Maaten, ``Certified data
  removal from machine learning models,'' in \emph{Proc. of ICLR}, 2020.

\bibitem{izzo2020approximate}
Z.~Izzo, M.~A. Smart, K.~Chaudhuri, and J.~Zou, ``Approximate data deletion
  from machine learning models,'' in \emph{Proc. of AISTATS}, 2021.

\bibitem{chen2021graph}
M.~Chen, Z.~Zhang, T.~Wang, M.~Backes, M.~Humbert, and Y.~Zhang, ``Graph
  unlearning,'' in \emph{Proc. of CCS}, 2021.

\bibitem{wu2020deltagrad}
Y.~Wu, E.~Dobriban, and S.~Davidson, ``Deltagrad: Rapid retraining of machine
  learning models,'' in \emph{Proc. of ICML}, 2020.

\bibitem{mcmahan2017communication}
B.~McMahan, E.~Moore, D.~Ramage, S.~Hampson, and B.~A. y~Arcas,
  ``Communication-efficient learning of deep networks from decentralized
  data,'' in \emph{Proc. of AISTATS}, 2017.

\bibitem{li2020federated}
T.~Li, A.~K. Sahu, A.~Talwalkar, and V.~Smith, ``Federated learning:
  Challenges, methods, and future directions,'' \emph{IEEE Signal Processing
  Magazine}, vol.~37, no.~3, pp. 50--60, 2020.

\bibitem{weng2021fedserving}
J.~Weng, J.~Weng, H.~Huang, C.~Cai, and C.~Wang, ``Fedserving: A federated
  prediction serving framework based on incentive mechanism,'' in \emph{Proc.
  of INFOCOM}, 2021.

\bibitem{zhang2020enabling}
X.~Zhang, F.~Li, Z.~Zhang, Q.~Li, C.~Wang, and J.~Wu, ``Enabling execution
  assurance of federated learning at untrusted participants,'' in \emph{Proc.
  of INFOCOM}, 2020.

\bibitem{wang2019beyond}
Z.~Wang, M.~Song, Z.~Zhang, Y.~Song, Q.~Wang, and H.~Qi, ``Beyond inferring
  class representatives: User-level privacy leakage from federated learning,''
  in \emph{Proc. of INFOCOM}, 2019.

\bibitem{wang2020optimizing}
H.~Wang, Z.~Kaplan, D.~Niu, and B.~Li, ``Optimizing federated learning on
  non-iid data with reinforcement learning,'' in \emph{Proc. of INFOCOM}, 2020.

\bibitem{wang2021device}
S.~Wang, M.~Lee, S.~Hosseinalipour, R.~Morabito, M.~Chiang, and C.~G. Brinton,
  ``Device sampling for heterogeneous federated learning: Theory, algorithms,
  and implementation,'' in \emph{Proc. of INFOCOM}, 2021.

\bibitem{chen2019round}
C.~Chen, W.~Wang, and B.~Li, ``Round-robin synchronization: Mitigating
  communication bottlenecks in parameter servers,'' in \emph{Proc. of INFOCOM},
  2019.

\bibitem{liu2020federated}
G.~Liu, Y.~Yang, X.~Ma, C.~Wang, and J.~Liu, ``Fed{E}raser: Enabling efficient
  client-level data removal from federated learning models,'' in \emph{Proc. of
  IWQoS}, 2021.

\bibitem{xu2020second}
P.~Xu, F.~Roosta, and M.~W. Mahoney, ``Second-order optimization for non-convex
  machine learning: An empirical study,'' in \emph{Proc. of SIAM}, 2020.

\bibitem{yao2021adahessian}
Z.~Yao, A.~Gholami, S.~Shen, M.~Mustafa, K.~Keutzer, and M.~Mahoney,
  ``Adahessian: An adaptive second order optimizer for machine learning,'' in
  \emph{Proc. of AAAI}, 2021.

\bibitem{brophy2021darerf}
J.~Brophy and D.~Lowd, ``Machine unlearning for random forests,'' in
  \emph{Proc. of ICML}, 2021.

\bibitem{schelter2021hedgecut}
S.~Schelter, S.~Grafberger, and T.~Dunning, ``Hedgecut: Maintaining randomised
  trees for low-latency machine unlearning,'' in \emph{Proc. of SIGMOD}, 2021.

\bibitem{chen2019novel}
Y.~Chen, J.~Xiong, W.~Xu, and J.~Zuo, ``A novel online incremental and
  decremental learning algorithm based on variable support vector machine,''
  \emph{Cluster Computing}, vol.~22, no.~3, pp. 7435--7445, 2019.

\bibitem{thudi2021necessity}
A.~Thudi, H.~Jia, I.~Shumailov, and N.~Papernot, ``On the necessity of
  auditable algorithmic definitions for machine unlearning,'' \emph{arXiv
  preprint arXiv:2110.11891}, 2021.

\bibitem{zhu2020deep}
L.~Zhu and S.~Han, ``Deep leakage from gradients,'' in \emph{Federated
  Learning}.\hskip 1em plus 0.5em minus 0.4em\relax Springer, 2020, pp. 17--31.

\bibitem{huang2021evaluating}
Y.~Huang, S.~Gupta, Z.~Song, K.~Li, and S.~Arora, ``Evaluating gradient
  inversion attacks and defenses in federated learning,'' in \emph{Proc. of
  NeurIPS}, 2021.

\bibitem{mahadevan2021certifiable}
A.~Mahadevan and M.~Mathioudakis, ``Certifiable machine unlearning for linear
  models,'' \emph{arXiv preprint arXiv:2106.15093}, 2021.

\bibitem{matthies1979solution}
H.~Matthies and G.~Strang, ``The solution of nonlinear finite element
  equations,'' \emph{International journal for numerical methods in
  engineering}, vol.~14, no.~11, pp. 1613--1626, 1979.

\bibitem{berahas2016multi}
A.~S. Berahas, J.~Nocedal, and M.~Tak{\'a}{\v{c}}, ``A multi-batch l-bfgs
  method for machine learning,'' in \emph{Proc. of NeurIPS}, 2016.

\bibitem{bollapragada2018progressive}
R.~Bollapragada, J.~Nocedal, D.~Mudigere, H.-J. Shi, and P.~T.~P. Tang, ``A
  progressive batching l-bfgs method for machine learning,'' in \emph{Proc. of
  ICML}, 2018.

\bibitem{he2016deep}
K.~He, X.~Zhang, S.~Ren, and J.~Sun, ``Deep residual learning for image
  recognition,'' in \emph{Proc. of CVPR}, 2016.

\bibitem{friedman2001elements}
J.~Friedman, T.~Hastie, R.~Tibshirani \emph{et~al.}, \emph{The elements of
  statistical learning}.\hskip 1em plus 0.5em minus 0.4em\relax Springer series
  in statistics New York, 2001, vol.~1, no.~10.

\bibitem{ly2017tutorial}
A.~Ly, M.~Marsman, J.~Verhagen, R.~P. Grasman, and E.-J. Wagenmakers, ``A
  tutorial on fisher information,'' \emph{Journal of Mathematical Psychology},
  vol.~80, pp. 40--55, 2017.

\bibitem{becker1988improving}
S.~Becker, Y.~Le~Cun \emph{et~al.}, ``Improving the convergence of
  back-propagation learning with second order methods,'' in \emph{Proceedings
  of the 1988 connectionist models summer school}, 1988, pp. 29--37.

\bibitem{zheng2017asynchronous}
S.~Zheng, Q.~Meng, T.~Wang, W.~Chen, N.~Yu, Z.-M. Ma, and T.-Y. Liu,
  ``Asynchronous stochastic gradient descent with delay compensation,'' in
  \emph{Proc. of ICML}, 2017.

\bibitem{kingma2014adam}
D.~P. Kingma and J.~Ba, ``Adam: A method for stochastic optimization,'' in
  \emph{Proc. of ICLR}, 2015.

\bibitem{tieleman2012lecture}
T.~Tieleman, G.~Hinton \emph{et~al.}, ``Lecture 6.5-rmsprop: Divide the
  gradient by a running average of its recent magnitude,'' \emph{COURSERA:
  Neural networks for machine learning}, vol.~4, no.~2, pp. 26--31, 2012.

\bibitem{griewank2008evaluating}
A.~Griewank and A.~Walther, \emph{Evaluating derivatives: principles and
  techniques of algorithmic differentiation}.\hskip 1em plus 0.5em minus
  0.4em\relax SIAM, 2008.

\bibitem{yao2020pyhessian}
Z.~Yao, A.~Gholami, K.~Keutzer, and M.~W. Mahoney, ``Pyhessian: Neural networks
  through the lens of the hessian,'' in \emph{Proc. of IEEE Big Data}, 2020.

\bibitem{yao2018hessian}
Z.~Yao, A.~Gholami, Q.~Lei, K.~Keutzer, and M.~W. Mahoney, ``Hessian-based
  analysis of large batch training and robustness to adversaries,'' in
  \emph{Proc. of NeurIPS}, 2018.

\bibitem{caldas2018leaf}
S.~Caldas, S.~M.~K. Duddu, P.~Wu, T.~Li, J.~Kone{\v{c}}n{\`y}, H.~B. McMahan,
  V.~Smith, and A.~Talwalkar, ``Leaf: A benchmark for federated settings,''
  \emph{arXiv preprint arXiv:1812.01097}, 2018.

\end{thebibliography}

\clearpage

\end{document}